\title{Deterministic Suffix-reading Automata}
\author{R Keerthan
\institute{Tata Consultancy Services Innovation Labs \\ Pune, India}
\institute{Chennai Mathematical Institute, India}
\email{keerthan.r@tcs.com}
\and
B Srivathsan
\institute{Chennai Mathematical Institute, India}
\institute{CNRS IRL 2000, ReLaX, Chennai, India}
\email{sri@cmi.ac.in}
\and
R Venkatesh \qquad\qquad Sagar Verma
\institute{Tata Consultancy Services Innovation Labs \\
Pune, India}
\email{\quad r.venky@tcs.com \quad\qquad verma.sagar2@tcs.com
  \footnote{All authors have contributed equally and are listed in the alphabetical order of last names.}}
}
\newtheorem{definition}{Definition}
\newtheorem{lemma}{Lemma}
\newtheorem{theorem}{Theorem}
\renewcommand{\a}{\alpha}
\renewcommand{\b}{\beta}
\newcommand{\s}{\sigma}
\newcommand{\e}{\epsilon}
\newcommand{\rr}{0}
\newcommand{\bb}{1}
\renewcommand{\gg}{2} 
\newcommand{\yy}{3} 
\newcommand{\ww}{4} 
\newcommand{\Aa}{\mathcal{A}}
\newcommand{\Ll}{\mathcal{L}}
\newcommand{\Ss}{\mathcal{S}}
\newcommand{\xra}{\xrightarrow}
\newcommand{\incl}{\subseteq}
\newcommand{\prfx}{\sqsubseteq_{\mathsf {pr}}}
\newcommand{\sfx}{\sqsubseteq_{\mathsf {sf}}}
\newcommand{\NP}{\mathsf{NP}}
\newcommand{\out}{\operatorname{Out}}
\newcommand{\outp}{\overline{\operatorname{Out}}}
\renewcommand{\e}{\varepsilon}
\renewcommand{\epsilon}{\e}
\newcommand{\spath}[3]{\ensuremath{\mathsf{SP}(#1 \rightsquigarrow #2,
#3)}}
\newcommand{\spaths}[2]{\ensuremath{\mathsf{SP}(#1, #2)}}
\begin{document}

\maketitle

\begin{abstract}
  We introduce deterministic suffix-reading automata (DSA), a new
  automaton model over finite words. Transitions in a DSA are labeled
  with words. From a state, a DSA triggers an outgoing transition on seeing
  a word \emph{ending} with the transition's label. Therefore,
  rather than moving along an input word letter by letter, a DSA can
  jump along blocks of letters, with each block ending in a suitable
  suffix. This feature allows DSAs to recognize regular languages more
  concisely, compared to DFAs. In this work, we focus on questions
  around finding a ``minimal'' DSA for a regular language. The number
  of states is not a faithful measure of the size of a DSA, since the
  transition-labels contain strings of arbitrary length. Hence, we
  consider total-size (number of states + number of edges + total
  length of transition-labels) as the size measure of DSAs.

  We start by formally defining the model and providing a DSA-to-DFA
  conversion that allows to compare the expressiveness and
  succinctness of DSA with related automata models.  Our main
  technical contribution is a method to \emph{derive} DSAs from a
  given DFA: a DFA-to-DSA conversion. 
  We make a surprising
  observation that the smallest DSA derived from the canonical DFA of
  a regular language $L$ need not be a minimal DSA for $L$. This
  observation leads to a fundamental bottleneck in deriving a minimal
  DSA for a regular language. In fact, we prove that
  given a DFA and a number $k \ge 0$, the problem of deciding if there
  exists an equivalent DSA of total-size $\le k$ is NP-complete.

\end{abstract}

\section{Introduction}

Deterministic Finite Automata (DFA) are fundamental to many areas in
Computer Science. Apart from being the cornerstone in the study of
regular languages, automata have been applied in several contexts:
such as text processing~\cite{DBLP:journals/tcs/MohriMW09}, 
model-checking~\cite{DBLP:books/daglib/0007403-2}, software
verification~\cite{DBLP:conf/cav/BouajjaniJNT00,
  DBLP:conf/cav/BouajjaniHV04, 989841}, and formal specification
languages~\cite{DBLP:journals/scp/Harel87}.  A central
challenge in the application of automata is the size of the automaton
involved.  Non-determinism gives exponential succinctness, however, a
deterministic model is useful in formal specifications and automata
implementations. The literature offers different ways to get succinct
representations of DFAs.  We recall a few of them below and propose a
new solution to this problem.

One of the reasons for large DFAs is the size of the alphabet, for
instance, consider the alphabet of all ASCII characters. Having a
transition for each letter from each state blows up the size of the
automata. Symbolic automata~\cite{DBLP:conf/popl/VeanesHLMB12,
  DBLP:conf/cav/DAntoniV17} have been proposed to handle large
alphabets. Letters on the edges are replaced by formulas, which club
together several transitions between a pair of states into one
symbolic transition. Symbolic automata have been implemented in many
tools and have been widely applied (see \cite{loris-page} for a list
of tools and applications).

Another dimension in reducing the DFA representation is to consider
transitions on a block of letters. Generalized automata (GA) are
extensions of non-deterministic finite automata (NFAs) that can
contain strings instead of letters on transitions. A word $w$ is
accepted if it can be broken down as $w_1 w_2 \dots w_k$ such that
each segment is read by a transition. This model was defined by
Eilenberg~\cite{DBLP:books/lib/Eilenberg74}, and later
Hashiguchi~\cite{DBLP:conf/icalp/Hashiguchi91} proved that for every
regular language $L$ there is a minimal GA in which the edge labels
are at most a polynomial function in $m$, where $m$ is size of the syntactic monoid
of
$L$.
Giammarresi \emph{et al.}~\cite{giammarresi1999deterministic}
considers deterministic generalized automata (DGA) and proposes an
algorithm to generate a minimal DGA (in terms of the number of states)
in which the edges have length at most the size of the minimal
DFA. The algorithm uses a method to suppress states and create longer
labels. The key observation is that minimal DGAs can be derived from
the canonical DFA by suppressing states.

\textbf{Our model.} In this work, we introduce \emph{Deterministic
  Suffix-reading Automata (DSAs)}.  We continue to work with strings
on transitions, as in DGA. However, the meaning of transitions is
different. A transition $q \xra{abba} q'$ is enabled if at $q$, a word
$w$ \emph{ending} with $abba$ is seen, and moreover no other
transition out of $q$ is enabled at a prefix of $w$. Intuitively, the
automaton tracks a finite set of pattern strings at each state. It
stays in a state until one of them appears as the \emph{suffix} of the
word read so far, and then makes the appropriate transition.
We start
with a motivating example. Consider a model for out-of-context \texttt{else} statements, in
relation to \texttt{if} and \texttt{endif} statements in a programming
language. Assume a suitable alphabet $\Sigma$ of characters.  Let
$L_{\texttt{else}}$ be the set of all strings over the alphabet where
(1) there are no nested \texttt{if} statements, and (2) there is an
\texttt{else} which is not between an \texttt{if} and an
\texttt{endif}.  
A DFA for this language performs string matching to detect the
\texttt{if}, \texttt{else} and \texttt{endif}. The DSA is shown in
Figure~\ref{fig:if-else}: at $s_0$, it passively reads letters until
it first sees an \texttt{if} or an \texttt{else}. If it is an
\texttt{if}, the automaton transitions to $s_1$. For instance, on a
word \texttt{abf4fgif} the automaton goes to $s_1$, since it ends with
\texttt{if} and there is no \texttt{else} seen so far. Similarly, at
$s_1$ it waits for one of the patterns \texttt{if} or an
\texttt{endif}. If it is the former, it goes to $s_3$ and rejects,
otherwise it moves to $s_0$, and so on.

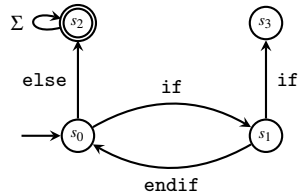
\begin{figure}[t]
  \centering
  \begin{tikzpicture}[state/.style={circle, draw, thick, inner sep =
      2pt}]
    \begin{scope}[every node/.style={state}]
      \node (0) at (0,0) {\tiny $s_0$}; \node (1) at (2.5, 0) {\tiny
        $s_1$}; \node [double] (2) at (0, 1.5) {\tiny $s_2$}; \node
      (3) at (2.5, 1.5) {\tiny $s_3$};
    \end{scope}
    \begin{scope}[->, >=stealth, thick, auto]
      \draw (-0.75, 0) to (0); \draw (0) to node {\scriptsize
        $\mathtt{else}$ } (2); \draw (0) to [bend left=30] node
      {\scriptsize $\mathtt{if}$} (1); \draw (1) to [bend left=30]
      node {\scriptsize $\mathtt{endif}$} (0); \draw (2) to [loop
      left] node {\scriptsize $\Sigma$} (2); \draw (1) to node [right]
      {\scriptsize $\mathtt{if}$} (3);
    \end{scope}
  \end{tikzpicture}
  \caption{DSA for out-of-context \texttt{else}}
  \label{fig:if-else}
\end{figure}

Suffix-reading automata have the ability to wait at a state, reading
long words until a matching pattern is seen. This results
in an arguably more readable specification for languages which are
``pattern-intensive''. This representation is orthogonal to the
approaches considered so far. Symbolic automata club together
transitions between a pair of states, whereas DSA can do this clubbing
across several states and transitions. DGA have this facility of
clubbing across states, but they cannot ignore intermediate letters,
which results in extra states and transitions.

\textbf{Overview of results.}
We formally present deterministic suffix-reading automata and its
semantics, quantify its size in comparison to an equivalent DFA, and
study an algorithm to construct DSAs starting from a DFA. This is in
the same spirit as in DGAs, where smaller DGAs are obtained by suppressing
states. For automata models with strings on transitions, number of
states is not a faithful measure of the size of a DSA. As described
in~\cite{giammarresi1999deterministic}, we consider the total size of
a DSA which includes the number of states, edges, and the sum of label
lengths. The key contributions of this paper are:
\begin{enumerate}
\item Presentation of a definition of a new kind of automaton - DSA~(Section~\ref{sec:new-automaton-model}).

\item Proof that DSAs accept regular languages, and nothing more.
  Every complete DFA can be seen as a DSA. For the converse, we prove
  that for every DSA of size $k$, there is a DFA with size at most
  $2k \cdot (1 + 2 |\Sigma|)$, where $\Sigma$ is the alphabet
  (Lemma~\ref{lem:tracking-dfa-language-equivalent},
  Theorem~\ref{thm:comparing-dsa-with-dfa-dga}).
  This answers the question of how small DSAs can be in comparison to
  DFAs for a certain language : if $n$ is the size of the minimal DFA
  for a language $L$, minimal DSAs for $L$ cannot be smaller than
  $\frac{n}{2 \cdot (1 + 2
    |\Sigma|)}$. 
  When the alphabet is large, one could expect smaller sized DSAs. We
  describe a family of languages $L_n$, with alphabet size $n$, for
  which the minimal DFA has size quadratic in $n$, whereas size of
  DSAs is a linear function of $n$ (Lemma~\ref{lem:dsa-small}).

\item We present a method to derive DSAs out of DFAs, a DFA-to-DSA
  conversion (Section~\ref{sec:suffix-tracking-sets}). In a nutshell,
  the derivation procedure selects subsets of DFA-states, and adds
  transitions labeled with (some of) the acyclic paths between
  them. Our main
  technical contribution lies in identifying sufficient conditions on
  the selected subset of states, so that the derivation procedure
  preserves the language (Theorem~\ref{def:suffix-tracking-set}).

\item We remark that minimal DSAs need not be unique, and make a
  surprising observation: the smallest DSA that we derive from the
  canonical DFA of $L$ need not be a minimal DSA. We find this
  surprising because (1) firstly, our derivation procedure is
  surjective: every DSA (satisfying some natural assumptions) can be
  derived from some corresponding DFA, and in particular, a minimal
  DSA can be derived from some DFA; (2) the observation suggests that
  one may need to start with a bigger DFA in order to derive a minimal
  DSA -- so, starting with a bigger DFA may result in a smaller DSA
  (Section~\ref{sec:minim-some-observ}).

\item Finally, we show that given a DFA and a number $k$, deciding if
  there exists a DSA of size $\le k$ is $\NP$-complete
  (Section~\ref{sec:complexity}).

\end{enumerate}

\emph{Related work.} The closest to our work
is~\cite{giammarresi1999deterministic} which introduces DGAs, and
gives a procedure to derive DGAs from DFAs. The focus however is on
getting DGAs with as few states as possible. The ideas presented in
Section~\ref{sec:minim-some-observ} of our work, also apply for
state-minimality: the same example shows that in order to get fewer
states, one may have to start with a bigger DFA.  This is in sharp
contrast to the DGA setting, where the derivation procedure of
\cite{giammarresi1999deterministic} yields a minimal DGA (in the
number of states) when applied on the canonical DFA. The problem of
deriving DGAs with minimal total-size was left open
in~\cite{giammarresi1999deterministic}, and continues to remain so, to
the best of our knowledge.
Expression automata~\cite{DBLP:conf/wia/HanW04} allow regular
expressions as transition labels.  This model was already considered
in~\cite{DBLP:journals/tc/BrzozowskiM63} to convert automata to
regular expressions. Every DFA can be converted to a two state
expression automaton with a regular expression connecting them.  A
model of deterministic Expression automata (DEA) was proposed
in~\cite{DBLP:conf/wia/HanW04} with restrictions that limit the
expressive power.  An algorithm to convert a DFA to a DEA, by repeated
state elimination, is proposed in~\cite{DBLP:conf/wia/HanW04}. The
resulting DEA is minimal in the number of states.  The issue with
Expression automata is the high expressivity of the transition
condition, that makes states almost irrelevant. On the other hand, DEA
have restrictions that make the model less expressive than DFAs.
Minimization of NFAs was studied
in~\cite{DBLP:journals/siamcomp/JiangR93} and shown to be
hard.Succinctness of models with different features, like alternation,
two-wayness, pebbles, and a notion of concurrency, has been studied
in~\cite{DBLP:journals/tcs/GlobermanH96}.


\section{Preliminaries}
\label{sec:preliminaries}

We fix a finite alphabet $\Sigma$. Following standard convention, we
write $\Sigma^*$ for the set of all words (including $\e$) over
$\Sigma$, and $\Sigma^+ = \Sigma^* \setminus \{ \e\}$. For
$w \in \Sigma^*$, we write $|w|$ for the length of $w$, with $|\e|$
considered to be $0$.  A word $u$ is a \emph{prefix} of word $w$ if
$w = u v$ for some $v \in \Sigma^*$; it is a \emph{proper-prefix} if
$v \in \Sigma^+$. Observe that $\e$ is a prefix of every word. A set
of words $W$ is said to be a \emph{prefix-free set} if no word in $W$
is a prefix of another word in $W$. A word $u$ is a
\emph{suffix} (resp. \emph{proper-suffix}) of $w$ if $w = vu$ for some
$v \in \Sigma^*$ (resp. $v \in \Sigma^+$).

A \emph{Deterministic Finite Automaton (DFA)} $M$ is a tuple
$(Q, \Sigma, q^{init}, \delta, F)$ where $Q$ is a finite set of
states, $q^{init} \in Q$ is the initial state, $F \incl Q$ is a set of
accepting states, and $\delta: Q \times \Sigma \to Q$ is a partial
function describing the transitions. If $\delta$ is complete, the
automaton is said to be a complete DFA. Else, it is called a trim
DFA. The run of DFA $M$ on a word $w = a_1 a_2 \dots a_n$ (where
$a_i \in \Sigma$) is a sequence of transitions
$(q_0, a_1, q_1) (q_1, a_2, q_2) \dots (q_{n-1}, a_n, q_n)$ where each
$(q_i, a_{i+1}, q_{i+1}) \in \delta$ for $0 \le i < n$, and
$q_0 = q^{init}$, the initial state of $M$. The run is accepting if
$q_n \in F$. If the DFA is complete, every word has a unique run. On a
trim DFA, each word either a has unique run, or it has no run. The
language $\Ll(M)$ of DFA $M$, is the set of words for which $M$ has an
accepting run.

We will now recall some useful facts about minimality of DFAs. Here,
by minimality, we mean DFAs with the least number of states. Every
complete DFA $M$ induces an equivalence $\sim_M$ over words:
$u \sim_M v$ if $M$ reaches the same state on reading both $u$ and $v$
from the initial state.  In the case of trim DFAs, this equivalence
can be restricted to set of prefixes of words in $\Ll(M)$. For a
regular language $L$, we have the Nerode equivalence: $u \approx_L v$
if for all $w \in \Sigma^*$, we have $uw \in L$ iff $v w \in L$. By
the well-known Myhill-Nerode theorem (see
\cite{DBLP:books/daglib/0016921} for more details), there is a
canonical DFA $M_L$ with the least number of states for $L$, and
$\sim_{M_L}$ equals the Nerode equivalence $\approx_L$. Furthermore,
every DFA $M$ for $L$ is a \emph{refinement} of $M_L$: $u \sim_M v$
implies $u \sim_{M_L} v$. If two words reach the same state in $M$,
they reach the same state in $M_L$.

  A \emph{Deterministic Generalized Automaton (DGA)}~\cite{giammarresi1999deterministic} $H$ is given by
  $(Q, \Sigma, q^{init}, E, F)$ where $Q, q^{init}, F$ mean the same
  as in DFA, and $E \incl Q \times \Sigma^+ \times Q$ is a finite set
  of edges labeled with words from $\Sigma^+$. For every state $q$,
  the set $\{ \alpha \mid (q, \alpha, q') \in E \}$ is a prefix-free
  set.
A run of DGA $H$ on a word $w$ is a sequence of edges
$(q_0, \alpha_1, q_1) (q_1, \alpha_2, q_2) \dots (q_{n-1}, \a_n, q_n)$
such that $w = \alpha_1 \alpha_2 \dots \a_n$, with $q_0$ being the
initial state. As usual, the run is accepting if $q_n \in F$. Due to
the property of the set of outgoing labels being a prefix-free set,
there is a atmost one run on every word. The language $\Ll(H)$ is the
set of words with an accepting run. Figure~\ref{fig:dfa-dga-examples}
gives examples of DFAs and corresponding DGAs.

\begin{figure}
  \centering
  \begin{tikzpicture}[state/.style={circle, draw, thick, inner sep =
      2pt}, scale=0.8]
    \begin{scope}[every node/.style={state}]
      \node [double] (0) at (0,0) {\scriptsize $q_0$}; \node (1) at
      (2,0) {\scriptsize $q_1$};
    \end{scope}
    \begin{scope}[->,>=stealth, thick, auto]
      \draw (-0.8, 0) to (0); \draw (0) to [bend left=20] node
      {\scriptsize $a$} (1); \draw (1) to [bend left=20] node
      {\scriptsize $b$} (0);
    \end{scope}
    \node [left] at (-0.7, 0) {\scriptsize $DFA \quad M_1:$};

    \begin{scope}[yshift=-2.5cm]
      \begin{scope}[every node/.style={state}]
        \node [double] (0) at (0,0) {\scriptsize $q_0$};
      \end{scope}
      \begin{scope}[->,>=stealth, thick, auto]
        \draw (-0.8, 0) to (0); \draw (0) to [loop right] node
        {\scriptsize $ab$} (0);
      \end{scope}
      \node [left] at (-0.7, 0) {\scriptsize $DGA \quad H_1:$};
    \end{scope}

    \draw (3.2,-3) to (3.2,1);
    
    \begin{scope}[xshift = 6cm]
      \begin{scope}[every node/.style={state}]
        \node (0) at (0,0) {\scriptsize $q_0$}; \node (1) at (1.8,0)
        {\scriptsize $q_1$}; \node (2) at (3.6,0) {\scriptsize $q_2$};
        \node [double] (3) at (5.4,0) {\scriptsize $q_3$};
      \end{scope}

      \begin{scope}[->,>=stealth, thick, auto]
        \draw (-0.8, 0) to (0); \draw (0) to [loop above] node
        {\scriptsize $b$} (0); \draw (0) to node {\scriptsize $a$}
        (1); \draw (1) to node {\scriptsize $a$} (2); \draw (2) to
        [loop above] node {\scriptsize $a$} (2); \draw (2) to node
        {\scriptsize $b$} (3); \draw (3) to [bend left=20] node
        {\scriptsize $a$} (1); \draw (3) to [bend left=35] node
        {\scriptsize $b$} (0);
       
      \end{scope}
      \node [left] at (-0.7, 0) {\scriptsize $DFA \quad M_2:$};

      \begin{scope}[yshift=-2.5cm]
        \begin{scope}[every node/.style={state}]
          \node (0) at (0,0) {\scriptsize $q_0$}; \node (2) at (3.6,0)
          {\scriptsize $q_2$}; \node [double] (3) at (5.4,0)
          {\scriptsize $q_3$};
        \end{scope}
        \begin{scope}[->,>=stealth, thick, auto]
          \draw (-0.8, 0) to (0); \draw (0) to [loop above] node
          {\scriptsize $b$} (0); \draw (0) to node {\scriptsize $aa$}
          (2); \draw (2) to [loop above] node {\scriptsize $a$} (2);
          \draw (2) to node {\scriptsize $b$} (3); \draw (3) to [bend
          left=20] node {\scriptsize $aa$} (2); \draw (3) to [bend
          left=40] node {\scriptsize $b$} (0);
       
      \end{scope}
      \node [left] at (-0.7, 0) {\scriptsize $DGA \quad H_2:$};
    \end{scope}
  \end{scope}
 
\end{tikzpicture}
\caption{Examples of DFAs and corresponding DGAs, over alphabet
  $\{a, b\}$.}
\label{fig:dfa-dga-examples}
\end{figure}
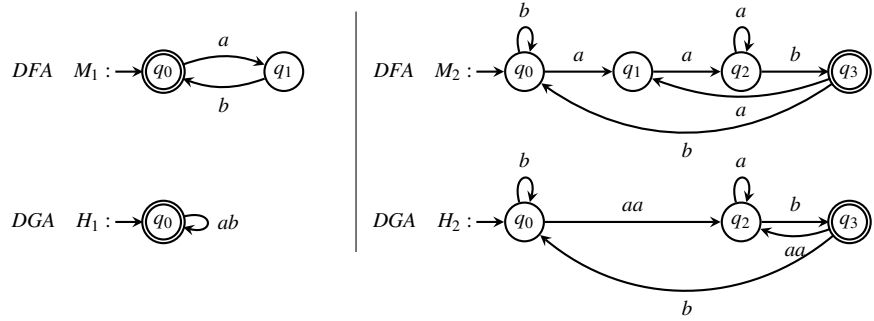

It was
shown in~ \cite{giammarresi1999deterministic} that there is no unique smallest DGA. The paper defines an operation
to suppress states and create longer labels. A state of a DGA is
called \emph{superflous} if it is neither the initial nor final state,
and it has no self-loop. For example, in
Figure~\ref{fig:dfa-dga-examples}, in $M_1$ and $M_2$, state $q_1$ is
superfluous. Such states can be removed, and every pair
$p \xra{\alpha} q$ and $q \xra{\beta} r$ can be replaced with
$p \xra{\alpha \beta} r$.  This operation is extended to a set of
states: given a DGA $H$, a set of states $S$, a DGA $\Ss(H, S)$ is
obtained by suppressing states of $S$, one after the other, in any
arbitrary order. For correctness, there should be no cycle in the
induced subgraph of $H$ restricted to $S$.
The paper proves that minimal DGAs (in number of
states) can be derived by suppressing states, starting from the
canonical DFA.


\section{A new automaton model -- DSA}
\label{sec:new-automaton-model}

We have seen an example of a deterministic suffix automaton in
Figure~\ref{fig:if-else}. A DSA consists of a set of states, and a
finite set of outgoing labels at each state. On an input word $w$, the
DSA finds the earliest prefix which ends with an outgoing label of the
initial state, erases this prefix and goes to the target state of the
transition with the matching label. Now, the DSA processes the rest of
the word from this new state in the same manner. In this section, we
will formally describe the syntax and semantics of DSA.

We start with some more examples. Figure~\ref{fig:example-aab} shows a
DSA for $L_2 = \Sigma^* aab$, the same language as the automata $M_2$
and $H_2$ of Figure~\ref{fig:dfa-dga-examples}. At $q_0$, DSA $\Aa_2$
waits for the first occurrence of $aab$ and as soon as it sees one, it
transitions to $q_3$. Here, it waits for further occurrences of
$aab$. For instance, on the word $abbaabbbaab$, it starts from $q_0$
and reads until $abbaab$ to move to $q_3$. Then, it reads the
remaining $bbaab$ to loop back to $q_3$ and accepts. On a word
$baabaa$, the automaton moves to $q_3$ on $baab$, and continues
reading $aa$, but having nowhere to move, it makes no transition and
rejects the word. Consider another language
$L_3 = \Sigma^*ab\Sigma^*bb$ on the same alphabet $\Sigma$. A similar
machine (as $\Aa_2$) to accept $L_3$ would look like $\Aa_3$ depicted
in Fig.~\ref{fig:example-ab-bb}. For example, on the word $abbbb$, it
would read until $ab$ and move from $q_0$ to $q_1$, read further until
$bb$ and move to $q_2$, then read $b$ and move back to $q_2$ to
accept. We can formally define such machines as automata that
transition on suffixes, or suffix-reading automata.
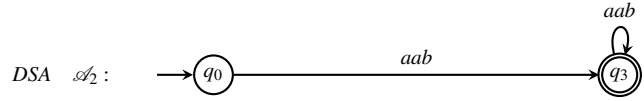
\begin{figure}
  \centering
  \begin{tikzpicture}[state/.style={circle, draw, thick, inner sep =
      2pt}]
    \begin{scope}[every node/.style={state}]
      \node (0) at (0,0) {\scriptsize $q_0$}; \node [double] (1) at
      (5.4,0) {\scriptsize $q_3$};
    \end{scope}
    \begin{scope}[->, thick, >=stealth, auto]
      \draw (-0.75, 0) to (0); \draw (0) to node {\scriptsize $aab$}
      (1); \draw (1) to [loop above] node {\scriptsize $aab$} (1);
      
    \end{scope}
    
    \node [left] at (-1.25, 0) {\scriptsize $DSA \quad \Aa_2:$};
  \end{tikzpicture}
  \caption{DSA $\Aa_2$ accepts $L_2 = \Sigma^*aab$, with
    $\Sigma = \{a, b\}$.}
  \label{fig:example-aab}
\end{figure}

\begin{definition}[DSA]\label{def:suff-reading-aut}
  A \emph{deterministic suffix-reading automaton (DSA)} $\Aa$ is a
  tuple $(Q, \Sigma, q^{init}, \Delta, F)$ where $Q$ is a finite set
  of states, $\Sigma$ is a finite alphabet, $q^{init} \in Q$ is the
  initial state, $\Delta \incl Q \times \Sigma^+ \times Q$ is a finite
  set of transitions, $F \incl Q$ is a set of accepting states.  For a
  state $q \in Q$, we define
  $\out(q) := \{ \a \mid (q, \a, q') \in \Delta \text{ for some } q'
  \in Q \}$ for the set of labels present in transitions out of
  $q$. No state has two outgoing transitions with the same label:
  if $(q, \alpha, q') \in \Delta$ and $(q, \alpha, q'') \in \Delta$,
  then $q' = q''$.
  
  The (total) size $|\Aa|$ of DSA $\Aa$ is defined as the sum of
  the number of states, the number of transitions, and the size
  $|\out(q)|$ for each $q \in Q$, where
  $|\out(q)| := \sum_{\a \in \out(q)} |\a|$.

\end{definition}

\begin{figure}[t]
  \centering
  \begin{tikzpicture}[state/.style={circle, draw, thick, inner sep =
      2pt}]
    \begin{scope}[every node/.style={state}]
      \node (0) at (0,0) {\scriptsize $q_0$}; \node (1) at (2,0)
      {\scriptsize $q_1$}; \node [double] (2) at (4,0) {\scriptsize
        $q_2$};
    \end{scope}
    \begin{scope}[->, thick, >=stealth, auto]
      \draw (-0.75, 0) to (0); \draw (0) to node {\scriptsize $ab$ }
      (1); \draw (1) to [bend left=30] node {\scriptsize $bb$} (2);
      \draw (2) to [loop right] node {\scriptsize $b$} (2); \draw (2)
      to [bend left=30] node {\scriptsize $a$} (1);
    \end{scope}
    \node at (-1.25, 0) {\scriptsize $\Aa_3:$};

    \begin{scope}[xshift=8cm]
      \begin{scope}[every node/.style={state}]
        \node (0) at (0,0) {\scriptsize $q_0$}; \node [double] (1) at
        (2,0) {\scriptsize $q_1$};
      \end{scope}
      \begin{scope}[->, thick, >=stealth, auto]
        \draw (-0.75, 0) to (0); \draw (0) to node {\scriptsize $ab$}
        (1); \draw (0) to [loop above] node {\scriptsize $ba$} (0);
      
      \end{scope}

      \node at (-1.25, 0) {\scriptsize $\Aa_4:$};
    \end{scope}
  \end{tikzpicture}
  \caption{$\Aa_3$ accepts $L_3=\Sigma^* ab \Sigma^*bb$ and $\Aa_4$
    accepts $L_4=(b^*ba)^*a^*ab$.}
  \label{fig:example-ab-bb}
\end{figure}
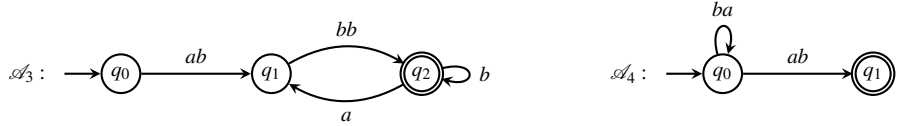

As mentioned earlier, at a state $q$ the automaton waits for a word
that ends with one of its outgoing labels. If more than one label matches,`
then the transition with the longest label is taken.  For example, consider the
DSA in Figure~\ref{fig:if-else}. At
state $s_1$ on reading $fghendif$, both the \texttt{if} and
\texttt{endif} transitions match. The longest match is \texttt{endif}
and therefore the DSA moves to $s_0$. This gives a deterministic
behaviour to the DSA. More precisely: at a state $q$, it reads $w$ to
fire $ (q, \a, q') $ if $\a$ is the longest word in $\out(q)$ which is
a suffix of $w$, and no proper prefix of $w$ has any label in
$\out(q)$ as suffix.  We call this a `move' of the DSA. For example,
consider $\Aa_4$ of Figure~\ref{fig:example-ab-bb} as a DSA. Let us
denote $t := (q_0, ab, q_1)$ and $t':= (q_0, ba, q_1)$. We have moves
$(t, ab)$, $(t, aab)$, $(t, aaab)$, and $(t', ba)$, $(t', bba)$,
etc. In order to make a move on $t$, the word should end with $ab$ and
should have neither $ab$ nor $ba$ in any of its proper prefixes.

\begin{definition}\label{def:DSA-moves}
  A \emph{move} of DSA $\Aa$ is a pair $(t, w)$ where
  $t = (q, \a, q') \in \Delta$ is a transition of $\Aa$ and
  $w \in \Sigma^+$ such that
  \begin{itemize}
  \item 
    $\a$ is the longest word in $\out(q)$ which is a suffix of $w$,
    and
  \item 
    no proper prefix of $w$ contains a label in $\out(q)$ as
    suffix. 
  \end{itemize}
  A move $(t, w)$ denotes that at state $q$, transition $t$ gets
  triggered on reading word $w$. We will also write
  $q \xra[\a]{~w~} q'$ for the move $(t, w)$.
\end{definition}

Whether a word is accepted or rejected is determined by a `run' of the
DSA on it. Naturally the set of words with accepting runs gives the
language of the DSA. Moreover, due to our ``move'' semantics, there is
a unique run for every word.

\begin{definition}
  A run of $\Aa$ on word $w$, starting from a state $q$, is a sequence
  of moves that consume the word $w$, until a (possibly empty) suffix
  of $w$ remains for which there is no move possible: formally, a run
  is a sequence
  $q = q_0 \xra[\a_0]{~w_0~} q_1 \xra[\a_1]{~w_1~} \cdots
  \xra[\a_{m-1}]{~w_{m-1}~} q_m \xra{w_m}$ such that
  $w=w_0 w_1 \dots w_{m-1} w_m$, and $q_m \xra{w_m}$ denotes that
  there is no move using any outgoing transition from $q_m$ on $w_m$
  or any of its prefixes. The run is accepting if $q_m \in F$ and
  $w_m = \e$ (no dangling letters in the end). The language $\Ll(\Aa)$
  of $\Aa$ is the set of all words that have an accepting run starting
  from the initial state $q^{init}$.
\end{definition}

\section{Comparison with DFA and DGA}
\label{sec:comparison-with-dfa}

Every complete DFA can be seen as an equivalent DSA --- since
$\out(q) = \Sigma$ for every state, the equivalent DSA is forced to
move on each letter, behaving like the DFA that we started off
with. For the DSA-to-DFA direction, we associate a specific DFA to
every DSA, as follows. The idea is to replace transitions of a DSA
with a string matching DFA for $\out(q)$ at each
state. Figure~\ref{fig:dsa-to-dfa-eg} gives an example. The
intermediate states correspond to proper prefixes of words in
$\out(q)$.

\begin{definition}[Tracking DFA for a DSA.]\label{def:tracking-dfa}
  For a DSA $\Aa = (Q^\Aa, \Sigma, q_{in}^\Aa, \Delta^\Aa, F^\Aa)$, we
  give a DFA $M_{\Aa}$, called its \emph{tracking DFA}. For
  $q \in Q^\Aa$, let $\outp(q)$ be the set of all prefixes of words in
  $\out(q)$.  States of $M_{\Aa}$ are given by:
  $Q^M = \bigcup_{q \in Q^\Aa}\{ (q, \b) \mid \b \in \outp(q)\} \cup
  q_{copy}$.

  The initial state is $(q^\Aa_{in}, \epsilon)$ and final states are
  $\{ (q, \epsilon) \mid q \in F^\Aa\}$. Transitions are as below: For
  every $q \in Q^\Aa, \b \in \outp(q), a \in \Sigma$, let $\b'$ be the
  longest word in $\outp(q)$ s.t $\b'$ is a suffix of $\b
  a$. 
  \begin{itemize}
  \item $(q, \b) \xra{a} (q', \epsilon)$ if
    $(q,\b', q') \in \Delta^\Aa$, ($q'$ may equal $q$ also)
  \item $(q, \b) \xra{a} (q, \b')$ if $\b' \notin \out(q)$ and
    $\b' \neq \epsilon$,
  \item $(q, \b) \xra{a} q_{copy}$ if $\b' = \epsilon$,
  \item $q_{copy} \xra{a} s$, if $(q, \epsilon) \xra{a} s$ according
    to the above (same outgoing transitions).
  \end{itemize}

\end{definition}

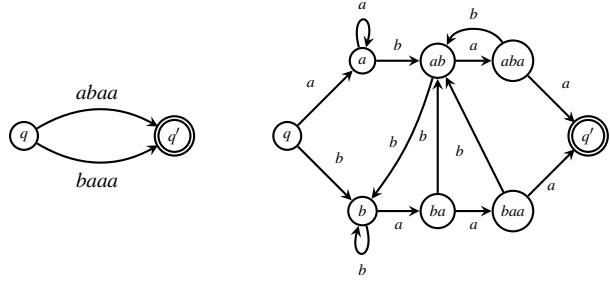
\begin{figure}
  \centering
  \begin{tikzpicture}[state/.style={circle, draw, thick, inner sep =
      2pt}]
    \begin{scope}[every node/.style={state}]
      \node (0) at (0,0) {\tiny $q$}; \node [double] (1) at (2,0)
      {\tiny $q'$};
    \end{scope}
    \begin{scope}[->, >=stealth, thick, auto]
      \draw (0) to [bend left=30] node {\scriptsize $abaa$} (1); \draw
      (0) to [bend right=30] node [below] {\scriptsize $baaa$} (1);
    \end{scope}

    \begin{scope}[xshift=3.5cm]
      \begin{scope}[every node/.style={state}]
        \node (0) at (0,0) {\tiny $q$}; \node (a) at (1,1) {\tiny
          $a$}; \node (b) at (1,-1) {\tiny $b$}; \node (ab) at (2, 1)
        {\tiny $ab$}; \node (ba) at (2,-1) {\tiny $ba$}; \node (aba)
        at (3, 1) {\tiny $aba$}; \node (baa) at (3,-1) {\tiny $baa$};
        \node [double] (1) at (4, 0) {\tiny $q'$};
      
      \end{scope}
      \begin{scope}[->,>=stealth, thick, auto]
        \draw (0) to node {\tiny $a$} (a); \draw (0) to node {\tiny
          $b$} (b); \draw (a) to node {\tiny $b$} (ab); \draw (b) to
        node [below] {\tiny $a$} (ba); \draw (ab) to node {\tiny $a$}
        (aba); \draw (ba) to node [below] {\tiny $a$} (baa); \draw
        (aba) to node {\tiny $a$} (1); \draw (baa) to node [below]
        {\tiny $a$} (1); \draw (b) to [loop below] node {\tiny $b$}
        (0); \draw (a) to [loop above] node {\tiny $a$} (a); \draw
        (ba) to node {\tiny $b$} (ab); \draw (baa) to node {\tiny $b$}
        (ab); \draw (ab) to [bend left=10] node [left] {\tiny $b$}
        (b); \draw (aba) to [bend right=60] node [above] {\tiny $b$}
        (ab);
      \end{scope}

    \end{scope}

  \end{tikzpicture}
  \caption{A DSA on the left, and the corresponding DFA for matching
    the strings $abaa$ and $baaa$.}
  \label{fig:dsa-to-dfa-eg}
\end{figure}

Intuitively, the tracking DFA implements the transition semantics of
DSAs. Starting at $(q, \e)$, the tracking DFA moves along states
marked with $q$ as long as no label of $\out(q)$ is seen as a
suffix. For all such words, the tracking DFA maintains the longest
word among $\outp(q)$ seen as a suffix so far. For instance, in
Figure~\ref{fig:dsa-to-dfa-eg}, at $q$ on reading word $aab$, the DFA
on the right is in state $ab$ (which is the equivalent of $(q, ab)$ in
the tracking DFA definition).

\begin{restatable}{lemma}{trackingDFAEquivalent}
  \label{lem:tracking-dfa-language-equivalent}
  For every DSA $\Aa$, the language $\Ll(\Aa)$ equals the language
  $\Ll(M_\Aa)$ of its tracking DFA.
\end{restatable}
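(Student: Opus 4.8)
The plan is to show that the tracking DFA $M_\Aa$ processes an input word letter by letter in lock-step with the ``move'' semantics of $\Aa$, so that the two machines accept exactly the same words. It is convenient to describe the run of $\Aa$ directly as a trajectory over \emph{configurations} $(q,v)$, where $q$ is the current state and $v$ the word read since $\Aa$ last entered $q$: the trajectory starts in $(q^\Aa_{in},\e)$, and from $(q,v)$ on a letter $a$ it goes to $(q',\e)$ if the longest word of $\outp(q)$ that is a suffix of $va$ is a label $\alpha$ with $(q,\alpha,q')\in\Delta^\Aa$ (this is the first moment a label of $\out(q)$ occurs as a suffix, since reaching $(q,v)$ already guarantees no prefix of $v$ ends in a label of $\out(q)$), and to $(q,va)$ otherwise. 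One checks this trajectory is exactly the run of $\Aa$ read one symbol at a time, so $w\in\Ll(\Aa)$ iff the configuration reached after $w$ is $(q,\e)$ with $q\in F^\Aa$.

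The heart of the argument is the invariant: \emph{after reading a prefix $p$ of the input, if $\Aa$ is in configuration $(q,v)$ and $\beta$ is the longest word of $\outp(q)$ that is a suffix of $v$, then $M_\Aa$ is in state $(q,\beta)$ when $v=\e$ or $\beta\neq\e$, and in $q_{copy}$ otherwise.} I would prove this by induction on $|p|$, the base case being immediate. The inductive step rests on a single elementary fact --- essentially correctness of Aho--Corasick-style incremental matching: \emph{if $S\incl\Sigma^*$ is prefix-closed and $\beta$ is the longest suffix of a string $x$ lying in $S$, then for any $a\in\Sigma$ the longest suffix of $xa$ in $S$ equals the longest suffix of $\beta a$ in $S$} --- since a nonempty suffix of $xa$ in $S$ has the form $ya$ with $y$ a suffix of $x$ and $y\in S$ by prefix-closure, so $y$ is a suffix of $\beta$ and $ya$ a suffix of $\beta a$, while conversely $\beta a$ is a suffix of $xa$. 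Applying this with $S=\outp(q)$ shows that the word $\beta'$ computed by the transition rule of $M_\Aa$ at $(q,\beta)$ on $a$ is precisely the longest word of $\outp(q)$ that is a suffix of $va$, so $M_\Aa$ tracks the correct suffix. One then case-splits on $\beta'$: if $\beta'\in\out(q)$, say $(q,\beta',q')\in\Delta^\Aa$, then $M_\Aa$ takes $(q,\beta)\xra{a}(q',\e)$ and simultaneously $\Aa$ fires the move $(q,\beta',q')$ on $va$ --- the move conditions hold because $\beta'$ is even the longest word of $\outp(q)\supseteq\out(q)$ that is a suffix of $va$, and no proper prefix of $va$ (equivalently, no prefix of $v$) ends in a label of $\out(q)$ by reachability of $(q,v)$ --- so both machines reach $(q',\e)$; if $\beta'$ is a proper prefix ($\beta'\neq\e$, $\beta'\notin\out(q)$), no move fires and both go to $(q,\beta')$; and if $\beta'=\e$, no move fires and $M_\Aa$ goes to $q_{copy}$. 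The sub-case where $M_\Aa$ currently sits at $q_{copy}$ is identical, using the matching fact with $\beta=\e$ and that the transitions out of $q_{copy}$ copy those out of $(q,\e)$.

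Instantiating the invariant with $p=w$ finishes the proof: the DSA run ends in a configuration $(q_m,w_m)$ where $w_m$ has no prefix ending in a label of $\out(q_m)$, so $M_\Aa$ ends in $(q_m,\e)$ precisely when $w_m=\e$, and in a non-accepting state ($q_{copy}$, or $(q_m,\beta)$ with $\beta\neq\e$) otherwise; hence $M_\Aa$ accepts $w$ iff $w_m=\e$ and $q_m\in F^\Aa$, which is exactly when the DSA run is accepting, giving $\Ll(M_\Aa)=\Ll(\Aa)$. The \textbf{main obstacle} is the incremental-matching step together with the accompanying claim that tracking only the single longest $\outp(q)$-suffix detects each occurrence of an outgoing label \emph{at the right time}, in agreement with the DSA's greedy longest-suffix move semantics; everything else is routine bookkeeping over the four clauses of the tracking-DFA transition rule.
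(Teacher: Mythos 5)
Your architecture --- a configuration invariant relating $(q,v)$ to the tracking-DFA state $(q,\beta)$ (or $q_{copy}$), driven by the prefix-closure/incremental-matching fact --- is the right skeleton and matches the intuition the paper gives for Definition~\ref{def:tracking-dfa}. The invariant itself, the treatment of $q_{copy}$, and the final accept/reject comparison are all fine. But the step you dispatch with ``one checks this trajectory is exactly the run of $\Aa$ read one symbol at a time'' is not bookkeeping; it is the whole content of the lemma, and as stated it does not go through. Your configuration semantics fires a transition exactly when the longest $\outp(q)$-suffix of $va$ happens to lie in $\out(q)$, whereas Definition~\ref{def:DSA-moves} fires at the \emph{first} prefix on which \emph{some} label of $\out(q)$ occurs as a suffix, and then takes the longest such \emph{label}. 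You prove one implication (if the longest $\outp(q)$-suffix is a label, a move fires), but you never prove the converse, and your appeal to ``reachability of $(q,v)$'' to conclude that no proper prefix of $va$ ends in a label only yields the weaker statement that no proper prefix had its \emph{longest $\outp(q)$-suffix} equal to a label.

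The two conditions genuinely disagree when some label is a suffix of a proper prefix of another label at the same state. Concretely, take $\out(q)=\{b,\,aba\}$ with $(q,b,q')$, $q'$ accepting, and read $cab$ from $q$: no proper prefix of $cab$ ends in a label, $b$ is the longest label-suffix of $cab$, so the DSA fires $q\xra[b]{cab}q'$ and accepts; but the tracking-DFA trajectory is $(q,\e)\xra{c}q_{copy}\xra{a}(q,a)\xra{b}(q,ab)$, which never fires because the longest $\outp(q)$-suffix $ab$ is not a label. So your invariant's inductive step breaks in the ``$\beta'\notin\out(q)$, hence no move fires'' branch. What is actually needed is the bridging claim that, at the first prefix where any label occurs as a suffix, the longest $\outp(q)$-suffix is itself a label (and coincides with the longest label-suffix); this holds precisely under the well-formedness condition of Definition~\ref{def:well-formed-dsa} (no outgoing label is a suffix of a proper prefix of another), and that is the argument you must supply --- or you must restrict to well-formed DSAs --- before the rest of your proof can be accepted.
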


Lemma~\ref{lem:tracking-dfa-language-equivalent} and the fact that
every complete DFA is also a DSA, prove that DSAs recognize regular
languages. We will now compare succinctness of DSA wrt DFA and DGA. 
We start with a
family of languages for which DSAs are concise.

\begin{restatable}{lemma}{dsaSmall}
  \label{lem:dsa-small}
  Let $\Sigma = \{a_1, a_2, \dots, a_n\}$ for some $n \ge 1$. Consider
  the language $L_n = \Sigma^* a_1 a_2 \dots a_n$. There is a DSA for
  this language with size $4 + 2n$.  Any DFA for $L_n$ has size at
  least $n^2$.
\end{restatable}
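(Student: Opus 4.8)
The plan is to handle the two halves of the statement separately. For the upper bound, I would exhibit a concrete DSA for $L_n = \Sigma^* a_1 a_2 \cdots a_n$ and count its size. The natural candidate is a two-state automaton: an initial state $q_0$ and an accepting state $q_f$, with a single transition $q_0 \xra{a_1 a_2 \cdots a_n} q_f$ and a self-loop $q_f \xra{a_1 a_2 \cdots a_n} q_f$. I would argue correctness using the move semantics: at $q_0$, the automaton passively reads letters until it first sees $a_1 \cdots a_n$ as a suffix, then jumps to $q_f$; at $q_f$ it does the same, looping back. A word is accepted iff it ends in $a_1 \cdots a_n$ (the first occurrence moves us to $q_f$, subsequent occurrences keep us there, and the run ends exactly at $q_f$ with no leftover letters precisely when the word ends with this block), which is exactly $L_n$. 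The size is then $2$ states $+$ $2$ transitions $+$ $2n$ for the two labels of length $n$, giving $4 + 2n$ as claimed. This part is essentially a routine verification against Definition~\ref{def:DSA-moves} and the definition of a run.

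For the lower bound, I would use the Myhill--Nerode machinery recalled in the Preliminaries: it suffices to show the Nerode equivalence $\approx_{L_n}$ has at least $n^2$ (or enough) classes, since any DFA for $L_n$ refines the canonical one, and the canonical DFA has one state per Nerode class; total size is then at least the number of states, which is at least $n^2$. Wait --- total size counts states plus edges, so a bound of $n^2$ on the number of \emph{states} already suffices, and in fact even fewer states would do once edges are added; but the cleanest route is to directly lower-bound the number of Nerode classes by $n^2$. The key idea is that to recognize whether the suffix $a_1 \cdots a_n$ will eventually appear, a DFA must track, at every position, the longest prefix of the pattern $a_1 \cdots a_n$ that is currently a suffix of the input read so far --- but because the alphabet letters $a_1, \dots, a_n$ are all distinct, the "failure function" of this pattern is trivial, and in fact one can say more: one needs to remember not just how far along the pattern one is, but enough about the recent history to know, for each $j$, whether reading $a_{j+1}$ now would continue a partial match. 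I would make this precise by producing $n^2$ pairwise Nerode-inequivalent words, e.g. words of the form $a_1 a_2 \cdots a_i$ appended after a "reset" letter, combined with markers indexed by a second parameter, and exhibiting for each pair a distinguishing suffix.

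The main obstacle is getting the lower bound to come out to exactly $n^2$ (rather than merely $\Omega(n^2)$ or only $n$) with a clean family of distinguishing words; the subtlety is that a single index "position in the pattern" only gives $n+1$ classes, so the quadratic blowup must come from the DFA needing to simultaneously remember information about two overlapping partial matches at different offsets, which is where the distinctness of the $a_i$ and the structure $\Sigma^* a_1 \cdots a_n$ interact. Concretely, I expect the distinguishing family to be indexed by pairs $(i,j)$ with $1 \le j \le i \le n$ or similar, using words like $a_j a_{j+1} \cdots a_i$ (a suffix of the pattern placed so that reading $a_{i+1} \cdots a_n$ completes it, but reading the pattern from its start would need a different continuation), and the distinguishing suffix for $(i,j)$ versus $(i',j')$ would be a carefully chosen tail of $a_1 \cdots a_n$. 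Once the $n^2$ inequivalent words are in hand, the conclusion is immediate from Myhill--Nerode and the refinement property. I would keep the correctness argument for the DSA short and spend the bulk of the proof on constructing and verifying the distinguishing family for the DFA lower bound.
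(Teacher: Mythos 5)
The first half of your proposal is fine: the two-state DSA with both transitions labelled $a_1a_2\cdots a_n$ is exactly the construction the paper uses (compare $\Aa_2$ in Figure~\ref{fig:example-aab}), and the size count $2+2+2n=4+2n$ is correct. One small point worth making explicit in the correctness argument: because the $a_i$ are pairwise distinct, occurrences of the pattern cannot overlap, so the greedy ``first matching prefix'' semantics finds every occurrence and the final residue $w_m$ is empty exactly when $w$ ends with $a_1\cdots a_n$.

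The lower-bound half has a genuine gap: your plan to exhibit $n^2$ pairwise Nerode-inequivalent words cannot succeed, because the Nerode equivalence of $L_n$ has only $n+1$ classes. Since the letters $a_1,\dots,a_n$ are pairwise distinct, a word $u$ can end with at most one nonempty prefix of the pattern (if $u$ ended with both $a_1\cdots a_j$ and $a_1\cdots a_{j'}$, its last letter would be both $a_j$ and $a_{j'}$), and the residual $u^{-1}L_n$ is completely determined by which prefix that is. In particular your candidate witnesses $a_ja_{j+1}\cdots a_i$ with $j\ge 2$ are all Nerode-equivalent to $\e$; the ``two overlapping partial matches at different offsets'' you hope to exploit never occur for this pattern. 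So the minimal DFA has $n+1$ states, not $n^2$, and no state-counting argument can reach a quadratic bound. The quadratic cost lives in the \emph{edges}: every word over $\Sigma$ is a prefix of a word in $L_n$ (append $a_1\cdots a_n$), so every reachable state of any DFA for $L_n$, trim or complete, must have an outgoing transition on every one of the $n$ letters. Combined with the $\ge n+1$ states forced by Myhill--Nerode, this gives at least $n(n+1)$ edges, and since total size counts states, edges, and label lengths (each DFA edge contributing $1$ to the last), the total size is at least $(n+1)+2n(n+1)\ge n^2$. You should replace the distinguishing-family construction with this edge-counting step.
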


We now state the final result of this section, which summarizes the
size comparison between DSAs, DFAs, DGAs. For the comparison to DFAs,
we use the fact that every DSA of size $k$ can be converted to its
tracking DFA, which has atmost $2k$ states. Therefore, size of the
tracking DFA is
bounded by $2k$ (states) $+ 2k \cdot |\Sigma|$ (edges)
$+ 2k \cdot |\Sigma|$ (label length), which comes to
$2k ( 1+ 2|\Sigma|)$.

\begin{restatable}{theorem}{comparison}\label{thm:comparing-dsa-with-dfa-dga}
  For a regular language $L$, let
  $n_F^{cmp}, n_F^{trim}, n_G^{trim}, n_S$ denote the size of the
  minimal complete DFA, minimal trim DFA, minimal trim DGA and minimal
  DSA respectively, where size is counted as the sum of the number of
  states, edges and length of edge labels, in all the automata. We
  have:
  \begin{enumerate}
  \item $\dfrac{n_F^{cmp}}{2 (1 + 2|\Sigma|)} \le n_S \le n_F^{cmp}$
  \item no relation between $n_S$ and $n_F^{trim}, n_G^{trim}$: there
    is a language for which $n_S$ is the smallest, and another
    language for which $n_S$ is the largest of the three.
  \end{enumerate}

\end{restatable}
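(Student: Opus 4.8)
The plan is to prove the three claimed relations separately, reusing the machinery already set up in the excerpt. For part~1, the upper bound $n_S \le n_F^{cmp}$ is immediate from the observation that every complete DFA can be read as a DSA (with $\out(q) = \Sigma$), and this translation preserves the size exactly: a complete DFA with $s$ states has $s|\Sigma|$ edges, each labeled by a single letter, so its size $s(1 + 2|\Sigma|)$ matches the size of the induced DSA. For the lower bound, I would take a minimal DSA $\Aa$ for $L$, of size $n_S$, and pass to its tracking DFA $M_\Aa$. By Lemma~\ref{lem:tracking-dfa-language-equivalent}, $\Ll(M_\Aa) = L$, and by the counting argument already sketched before the theorem statement, $|M_\Aa| \le 2 n_S (1 + 2|\Sigma|)$. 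Since $M_\Aa$ need not be complete or minimal, its size is at least $n_F^{cmp}$ only after completing it and minimizing — here I need to be slightly careful: the minimal \emph{complete} DFA is at most one state (a sink) and $|\Sigma|$ edges larger than the minimal trim DFA, but more directly, $M_\Aa$ as constructed is already complete (Definition~\ref{def:tracking-dfa} gives a transition on every letter from every state, routing through $q_{copy}$), so $n_F^{cmp} \le |M_\Aa| \le 2 n_S (1 + 2|\Sigma|)$, which rearranges to the stated bound. I expect verifying completeness of the tracking DFA from Definition~\ref{def:tracking-dfa} to be a routine but necessary check.

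For part~2, I would exhibit two families of languages witnessing the two extremes. For the language where $n_S$ is smallest, I would reuse $L_n = \Sigma^* a_1 \cdots a_n$ from Lemma~\ref{lem:dsa-small}: it has a DSA of size $4 + 2n$, its minimal complete DFA has size $\ge n^2$, and the minimal trim DFA and trim DGA are also at least quadratic in $n$ (the trim DFA still needs roughly $n$ states with $|\Sigma| = n$ outgoing edges each to do the string matching, and the DGA bound of \cite{giammarresi1999deterministic} derives the DGA from the canonical DFA by suppressing superfluous states, which here cannot collapse the quadratic blow-up since the relevant states carry self-loops or are otherwise non-superfluous). So for large enough $n$, $n_S < \min(n_F^{trim}, n_G^{trim})$. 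The main work here is to argue a matching lower bound on $n_G^{trim}$ for $L_n$ — I would do this by a fooling/Myhill--Nerode style argument adapted to DGAs, or by invoking the fact that a trim DGA's underlying DFA (obtained by expanding edge labels into letter-by-letter paths) is a trim DFA for $L_n$, hence has $\ge cn^2$ total size, and expanding labels only increases total size.

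For the other direction — a language where $n_S$ is largest — I would pick something like a language whose minimal trim DFA is tiny and "non-pattern-like", forcing any DSA to essentially simulate a DFA letter by letter while paying the overhead of the tracking-DFA detour, or more simply a finite language or a small language over a two-letter alphabet where trimming genuinely saves and where DSA transitions cannot jump over any block. A clean candidate is a language like $\{a\}$ or $a\Sigma^*$ or a short finite set: the minimal trim DFA and trim DGA are very small (a trim DGA can use a single long edge), but because a DSA's semantics at the initial state insists on the \emph{earliest} matching suffix and cannot forbid dangling letters except by having no outgoing move, a DSA for a finite or "prefix-sensitive" language may be forced to spell out states. I would then just compute $n_S$, $n_F^{trim}$, $n_G^{trim}$ for this concrete example and check $n_S$ is strictly the largest; since these are small fixed automata this is a finite verification rather than a family.

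The hard part will be the lower bounds in part~2 — specifically, lower-bounding the minimal trim DGA size for $L_n$ (to get the "$n_S$ smallest" example) and lower-bounding the minimal DSA size for the small example (to get the "$n_S$ largest" example). For the DGA bound I expect the cleanest route is the label-expansion argument reducing to the known quadratic trim-DFA lower bound; for the DSA bound I would leverage Lemma~\ref{lem:tracking-dfa-language-equivalent} in reverse together with the DSA-to-DFA size blow-up being only linear, so a genuinely small DSA would contradict a lower bound on the trim DFA only if the constants work out — meaning I may instead need a direct combinatorial argument on DSA runs, counting the states and the forced label lengths, which is the step most likely to require care with the "longest suffix" and "earliest prefix" tie-breaking rules in Definition~\ref{def:DSA-moves}.
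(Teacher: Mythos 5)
Your proposal is correct and matches the paper's own route: part~1 follows from reading every complete DFA as a DSA of the same size (upper bound) and from the $2k(1+2|\Sigma|)$ bound on the tracking DFA of a minimal DSA (lower bound), and part~2 uses $L_n$ from Lemma~\ref{lem:dsa-small} as the ``$n_S$ smallest'' witness together with a small concrete language as the ``$n_S$ largest'' witness. The only details to tighten are that naively expanding DGA labels letter-by-letter yields an NFA, so you should expand each state's prefix-free outgoing label set into a trie to stay deterministic (still only a constant-factor size blow-up, which suffices against the quadratic trim-DFA lower bound), and that you should commit to and verify one second witness --- e.g.\ $L=\{a\}$ over $\Sigma=\{a,b\}$ works, since $n_F^{trim}=n_G^{trim}=4$ while any DSA must route $b$ to a separate dead state to avoid accepting $ba$ under the suffix-matching semantics, forcing $n_S=7$.
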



\section{Suffix-tracking sets -- obtaining DSA from DFA}
\label{sec:suffix-tracking-sets}

For DGAs, a
method to derive smaller DGAs by suppressing states was 
recalled in Section~\ref{sec:preliminaries}. Our goal is to
investigate a similar procedure for DSAs. The DSA model creates
new challenges. Suppressing states may not always lead to
smaller automata (in total
size). Figure~\ref{fig:minim-challenges-suppr-states} illustrates an
example where suppressing states leads to an exponentially larger
automaton, due to the exponentially many
paths created. But, suppressing states may sometimes
indeed be useful: in Figure~\ref{fig:suppressing-states-reduces-size},
the DFA on the left is performing a string matching to deduce the
pattern $ab$. On seeing $ab$, it accepts. Any extension is
rejected. This is succinctly captured by the DSA on the right. Notice
that the DSA is obtained by suppressing states $q_1$ and $q_3$. So,
suppressing states may sometimes be useful and sometimes not. In~\cite{giammarresi1999deterministic}, the focus was on
getting a DGA with minimal number of states, and hence suppressing
states was always useful.

More importantly, when can we suppress states? DGAs cannot ``ignore''
parts of the word. This in particular leads to the requirement that a
state with a self-loop cannot be suppressed. DSAs have a more
sophisticated transition semantics. Therefore, the procedure to
suppress states is not as simple. This is the subject of this
section. We deviate from the DGA setting in two ways:  we will select a subset of good states from which we can
construct a DSA (essentially, this means the rest of the states are
suppressed); secondly, our starting point will be complete DFA, on
which we make the choice of states (in DGAs, one could start
with any DGA and suppress states). Our procedure can be broken down
into two steps: (1) Start from a complete DFA, select a subset of states and
    build an induced DSA by connecting states using
    acyclic paths between them; (2) Remove some useless
    transitions.

\begin{figure}
  \centering
  
    \begin{tikzpicture}[state/.style={circle, draw, thick, inner sep =
      2pt}, scale=0.8]
    \begin{scope}[every node/.style={state}]
      \node (0) at (0,0) {\tiny $q_0$};
      \node (1) at (2,0) {\tiny $q_1$};
      \node (2) at (4,0)  {\tiny $q_2$};
      \node [double] (3) at (6,0) {\tiny $q_3$};
    \end{scope}
    \begin{scope}[->, >=stealth, thick]
      \draw (-0.5,0) to (0);
      \draw (0) to [bend left] node [above] {\scriptsize $a$} (1); 
      \draw (0) to [bend right] node [below] {\scriptsize $b$} (1);
       \draw (1) to [bend left] node [above] {\scriptsize $a$} (2); 
       \draw (1) to [bend right] node [below] {\scriptsize $b$} (2);
        \draw (2) to [bend left] node [above] {\scriptsize $a$} (3); 
      \draw (2) to [bend right] node [below] {\scriptsize $b$} (3);
    \end{scope}

    \begin{scope}[xshift=8cm]
       \begin{scope}[every node/.style={state}]
      \node (0) at (0,0) {\tiny $q_0$};
      \node [double] (3) at (4,0) {\tiny $q_3$};
    \end{scope}

    \begin{scope}[->, >=stealth, thick]
      \draw (-0.5, 0) to (0);
      \draw (0) to (3);
    \end{scope}

    \node at (2, 0.4) {\scriptsize $aaa, aab, aba, abb$};
    \node at (2, -0.4) {\scriptsize $baa, bab, bba, bbb$};
    
    \end{scope}
  \end{tikzpicture}
  \caption{Suppressing states can add exponentially many labels and
    increase total size.}
  \label{fig:minim-challenges-suppr-states}
\end{figure}
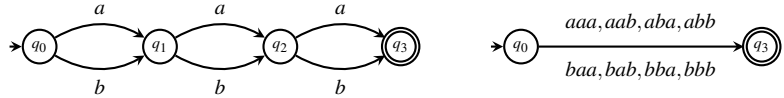

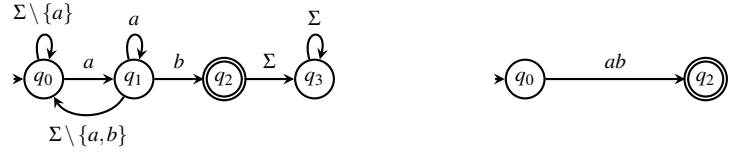
\begin{figure}
  \centering
  \begin{tikzpicture}[state/.style={circle, draw, thick, inner sep =
      2pt}, scale=0.8]
    \begin{scope}[every node/.style={state}]
      \node (0) at (0,0) {\scriptsize $q_0$};
      \node (1) at (1.5,0) {\scriptsize $q_1$};
      \node [double] (2) at (3,0) {\scriptsize $q_2$};
      \node (3) at (4.5,0) {\scriptsize $q_3$};
    \end{scope}
    \begin{scope}[->,>=stealth, thick, auto]
      \draw (-0.5, 0) to (0);
      \draw (0) to [loop above] node {\scriptsize $\Sigma
        \setminus \{a\}$} (0);
      \draw (0) to node {\scriptsize $a$} (1);
      \draw (1) to [bend left = 60] node [below] {\scriptsize $\Sigma
        \setminus \{a, b\}$} (0);
      \draw (1) to [loop above] node {\scriptsize $a$} (1);
      \draw (1) to node {\scriptsize $b$} (2);
      \draw (2) to node {\scriptsize $\Sigma$} (3);
      \draw (3) to [loop above] node {\scriptsize $\Sigma$} (3);
    
    \end{scope}

    \begin{scope}[xshift=8cm]
      \begin{scope}[every node/.style={state}]
        \node (0) at (0,0) {\scriptsize $q_0$};
        \node [double] (2) at (3,0) {\scriptsize $q_2$};
      \end{scope}
      \begin{scope}[->,>=stealth, thick, auto]
        \draw (-0.5, 0) to (0);
        \draw (0) to node {\scriptsize $ab$} (2);
      \end{scope}
    \end{scope}
  \end{tikzpicture}
  \caption{Suppressing states can sometimes reduce total size}
  \label{fig:suppressing-states-reduces-size}
\end{figure}

\paragraph{Building an induced DSA.}

We start with an illustrative example. Consider DFA $M$ in
Figure~\ref{fig:induced-eqv}. 
The DSA on the right of the figure shows such an induced DSA obtained
by marking states $\{q_0, q_2\}$ and connecting them using simple
paths. Notice that the language of the induced DSA and the original
DFA are same in this case. Intuitively, all words that end with an $a$
land in $q_1$. Hence, $q_1$ can be seen to ``track'' the suffix $a$.
Now, consider Figure~\ref{fig:induced-not-eqv}. We do the same trick,
by marking states $\{q_0, q_2\}$ and inducing a DSA. Observe that the
DSA does not accept $aba$, and hence is not language equivalent.  When
does a subset of states induce a language equivalent DSA? Roughly,
this is true when the states that are suppressed track ``suitable
suffixes'' (a reverse engineering of the tracking DFA construction of
Definition~\ref{def:tracking-dfa}). As we will see, the suitable
suffixes will be the simple paths from the selected states to the
suppressed states. We begin by formalizing these ideas and then
present sufficient conditions that ensure language equivalence of the
resulting DSA.

\begin{figure}[t]
  \centering
  \begin{tikzpicture}[state/.style={circle, draw, thick, inner sep =
      2pt},scale=0.8]
    \begin{scope}[every node/.style={state}]
      \node (0) at (0,0) {\scriptsize $q_0$}; \node (1) at (2,0)
      {\scriptsize $q_1$}; \node [double] (2) at (4,0) {\scriptsize
        $q_2$};
    \end{scope}
    \begin{scope}[->, thick, >=stealth, auto]
      \draw (-0.75, 0) to (0); \draw (0) to node {\scriptsize $a$ }
      (1); \draw (1) to node {\scriptsize $b$} (2); \draw (0) to [loop
      above] node {\scriptsize $b$} (0); \draw (1) to [loop above]
      node {\scriptsize $a$} (1); \draw (2) to [loop above] node
      {\scriptsize $a,b$} (2);
    \end{scope}
    \node at (-1.25, 0) {\scriptsize $M:$};

    \begin{scope}[xshift=8cm]
      \begin{scope}[every node/.style={state}]
        \node (0) at (0,0) {\scriptsize $q_0$}; \node [double] (1) at
        (2,0) {\scriptsize $q_2$};
      \end{scope}
      \begin{scope}[->, thick, >=stealth, auto]
        \draw (-0.75, 0) to (0); \draw (0) to node {\scriptsize $ab$}
        (1); \draw (1) to [loop above] node {\scriptsize $a,b$} (1);
        \draw (0) to [loop above] node {\scriptsize $b$} (0);
      
      \end{scope}

      \node at (-1.25, 0) {\scriptsize $\Aa_S:$};
    \end{scope}
  \end{tikzpicture}
  \caption{DFA $M$ and an equivalent DSA $\Aa_S$ `induced' with
    $S = \{q_0, q_2\}$.}
  \label{fig:induced-eqv}
\end{figure}
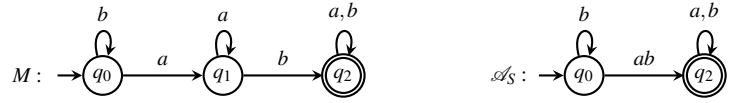

\begin{figure}[t]
  \centering
  \begin{tikzpicture}[state/.style={circle, draw, thick, inner sep =
      2pt}, scale=0.8]
    \begin{scope}[every node/.style={state}]
      \node (0) at (0,0) {\scriptsize $q_0$}; \node (1) at (2,0)
      {\scriptsize $q_1$}; \node [double] (2) at (4,0) {\scriptsize
        $q_2$};
    \end{scope}
    \begin{scope}[->, thick, >=stealth, auto]
      \draw (-0.75, 0) to (0); \draw (0) to node {\scriptsize $a$ }
      (1); \draw (1) to node {\scriptsize $a$} (2); \draw (0) to [loop
      above] node {\scriptsize $b$} (0); \draw (1) to [loop above]
      node {\scriptsize $b$} (1); \draw (2) to [loop above] node
      {\scriptsize $a,b$} (2);
    \end{scope}
    \node at (-1.25, 0) {\scriptsize $M:$};

    \begin{scope}[xshift=8cm]
      \begin{scope}[every node/.style={state}]
        \node (0) at (0,0) {\scriptsize $q_0$}; \node [double] (1) at
        (2,0) {\scriptsize $q_2$};
      \end{scope}
      \begin{scope}[->, thick, >=stealth, auto]
        \draw (-0.75, 0) to (0); \draw (0) to node {\scriptsize $aa$}
        (1); \draw (1) to [loop above] node {\scriptsize $a,b$} (1);
        \draw (0) to [loop above] node {\scriptsize $b$} (0);
      
      \end{scope}

      \node at (-1.25, 0) {\scriptsize $\Aa_S:$};
    \end{scope}
  \end{tikzpicture}
  \caption{DFA $M$ and DSA $\Aa_S$ `induced' with $S = \{q_0,
    q_2\}$. Not equivalent.}
  \label{fig:induced-not-eqv}
\end{figure}
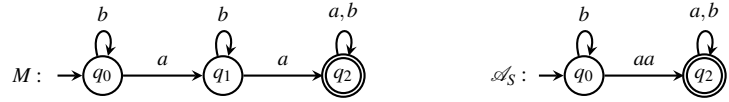

\begin{definition}[Simple words]\label{def:simple-words}
  Consider a complete DFA $M = (Q, \Sigma, q^{init}, \Delta, F)$. Let
  $S \incl Q$ be a subset of states, and $p, q \in Q$. We define
  $\spath{p}{q}{S}$, the \emph{simple words from $p$ to $q$ modulo
    $S$}, as the set of all words $a_1 a_2 \dots a_n \in \Sigma^+$
  such that there is a path:
  $p = p_0 \xra{a_1} p_1 \xra{a_2} \cdots p_{n-1} \xra{a_n} p_n = q$
  in $M$ where
  \begin{itemize}
  \item no intermediate state belongs to $S$:
    $\{ p_1, \dots, p_{n-1}\} \incl Q \setminus S$, and
  \item there is no intermediate cycle: if $p_i = p_j$ for some
    $0 \le i < j \le n$, then $p_i = p_0$ and $p_j = p_n$.
  \end{itemize}
  We write $\spaths{p}{S}$ for $\bigcup_{q \in Q} \spath{p}{q}{S}$,
  the set of all simple words modulo $S$, emanating from $p$.
\end{definition}

For example, in Figure~\ref{fig:induced-eqv}, with $S = \{q_0, q_2\}$,
we have $\spath{q_0}{q_1}{S} = \{a\}$, $\spath{q_0}{q_0}{S} = \{b\}$
and $\spath{q_0}{q_2}{S} = ab$. These are the same in
Figure~\ref{fig:induced-not-eqv}, except $\spath{q_0}{q_2}{S} = aa$.

Fix a complete DFA $M$ for this section. A DSA can be `induced' from
$M$ using $S$, by fixing states to be $S$ (initial and final states
retained) and transitions to be the simple words modulo $S$ connecting
them i.e. $p \xra{\s} q$ if $\s \in \spath{p}{q}{S}$ (Figure
\ref{fig:induced-eqv}).

\begin{definition}[Induced DSA]\label{def:induced-dsa}
  Given a DFA $M$ and a set $S$ of states in $M$ that contains the
  initial and final states, we define the induced DSA of $M$ (using
  $S$). The states of the induced DSA are given by $S$. The initial
  and final states are the same as in $M$. The transitions are given
  by the simple words modulo $S$ i.e. $p \xra{\s} q$ if
  $\s \in \spath{p}{q}{S}$, for every pair of states $p, q \in S$.
\end{definition}

The induced DSA may not be language-equivalent (Figure
\ref{fig:induced-not-eqv}); to ensure that, we need to check some
conditions. Here is a central definition.

\begin{definition}[Suffix-compatible transitions]\label{def:suffix-compatibility}
  Fix a subset $S \incl Q$. A transition $q \xra{a} u$ is
  suffix-compatible w.r.t. $S$ if either of
  $q,u \in S~\textbf{OR}~\forall p \in S$, and for every
  $\s \in \spath{p}{q}{S}$, there is an $\a \in \spath{p}{u}{S}$ s.t.:
  \begin{itemize}
  \item $\a$ is a suffix of $\s a$, and
  \item moreover, $\a$ is the longest suffix of $\s a$ among words in
    $\spaths{p}{S}$.
  \end{itemize}
\end{definition}

Note that a transition $q \xra{a} u$ is trivially suffix-compatible if
$q \in S$ or $u \in S$. The rest of the condition only needs to be
checked when both of $q,u \notin S$. In
Figure~\ref{fig:induced-not-eqv}, we find the self-loop at $q_1$ to
not be suffix-compatible: we have $S = \{q_0, q_2\}$, and
$\spath{q_0}{q_1}{S} = \{ a \}$, $\spaths{q_0}{S} = \{b, a, ab\}$; the
transition $q_1 \xra{b} q_1$ is not suffix-compatible since there is
no suffix of $ab$ in $\spath{q_0}{q_1}{S}$. Whereas in
Figure~\ref{fig:induced-eqv}, the loop is labeled $a$ instead of
$b$. The transition $q_1 \xra{a} q_1$ is suffix-compatible, since the longest suffix of $aa$ among $\spaths{q_0}{S}$ is $a$ and it is
present in $\spath{q_0}{q_1}{S}$. Let us take the DFA in the right of
Figure~\ref{fig:dsa-to-dfa-eg}, and let $S = \{q, q'\}$. Here are some
of the simple path sets: $\spath{q}{ab}{S} = \{ab, bab, baab\}$,
$\spath{q}{aba}{S} = \{aba, baba, baaba\}$. Consider the transition
$aba \xra{b} ab$. It can be verified that for every
$\s \in \spath{q}{aba}{S}$, the longest suffix of the extension
$\s a$, among simple paths out of $q$, indeed lies in the state $ab$.
In fact, all transitions satisfy suffix-compatibility w.r.t. the
chosen set $S$.

The suffix-compatibility condition is described using simple paths to
states. It requires that every transition take each simple word
reaching its source to the state tracking the longest suffix of its
one-letter extension. This condition on simple paths, transfers to all
words, that circle around the suppressed states. In
Figure~\ref{fig:dsa-to-dfa-eg}, this property can be verified by
considering the word $bbabab$ and its run: $q \xra{b} b \xra{b} b \xra{a} ba \xra{b} ab \xra{a} aba \xra{b} ab$.
At each step, the state reached corresponds to the longest suffix
among the simple words out of $q$.
In the next two lemmas, we prove this claim.

We will use a special
notation: for a state $p \in S$, we write $\out(p, S)$ for
$\bigcup_{r \in S} \spath{p}{r}{S}$; these are the simple words that
start at $p$ and end in some state $r$ of $S$. Notice that these are
the words that appear as transitions in the induced DSA. In
particular, $\out(p)$ in the induced DSA equals $\out(p, S)$.

\begin{restatable}{lemma}{sfxCompatibleDfaToWords}
  \label{lem:sfx-compatibility-dfa-paths-to-words}
  Let $S$ be a set of states such that every transition of $M$ is
  suffix-compatible w.r.t. $S$. Pick $p \in S$, and let
  $w \in \Sigma^+$ be a word with a run
  $p = p_0 \xra{w_1} p_1 \xra{w_2} p_2 \dots p_{n-1} \xra{w_n} p_n$
  such that the intermediate states $p_1, \dots, p_{n-1}$ belong to
  $Q \setminus S$. The state $p_n$ may or may not be in $S$. Then:
  \begin{itemize}
  \item no proper prefix of $w$ contains any word from $\out(p,S)$ as
    suffix, and
  \item there is $\a \in \spath{p}{p_n}{S}$ such that $\a$ is the
    longest suffix of $w$ among words in $\spaths{p}{S}$.
  \end{itemize}
\end{restatable}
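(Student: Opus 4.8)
The plan is to prove both bullet points by induction on $n$, the number of segments in the run, since the key difficulty is that a word $w$ circling around the suppressed states may end with a suffix that wraps around $p$ itself.

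First I would set up notation: write $w = w_1 w_2 \cdots w_n$ where each $w_i \in \Sigma^+$ is the letter read along $p_{i-1} \xra{w_i} p_i$ — actually, since transitions in $M$ are single letters, it is cleanest to take $w = a_1 a_2 \cdots a_n$ with $a_i \in \Sigma$ and $p_{i-1} \xra{a_i} p_i$, so $p_1, \dots, p_{n-1} \notin S$. The inductive hypothesis, applied to the prefix reaching $p_{n-1}$ (whose intermediate states $p_1, \dots, p_{n-2}$ are all outside $S$), gives me: (a) no proper prefix of $a_1 \cdots a_{n-1}$ has a word of $\out(p,S)$ as a suffix, and (b) there is $\a_{n-1} \in \spath{p}{p_{n-1}}{S}$ that is the longest suffix of $a_1 \cdots a_{n-1}$ among $\spaths{p}{S}$. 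The base case $n=1$: the single letter $a_1$ has no proper prefix other than $\e$, which trivially contains no outgoing label (labels are in $\Sigma^+$); and I need to exhibit $\a \in \spath{p}{p_1}{S}$ that is the longest suffix of $a_1$ among $\spaths{p}{S}$ — this is exactly the suffix-compatibility of the transition $p \xra{a_1} p_1$, reading off that $\a$ with $\s = \e \in \spaths{p}{S}$... but wait, $\e \notin \spaths{p}{S}$ since simple words are in $\Sigma^+$. So the base case needs a small separate argument: if $p_1 \in S$ then $a_1 \in \spath{p}{p_1}{S}$ directly and it is clearly the longest (and only relevant) suffix; if $p_1 \notin S$, I still get $a_1 \in \spath{p}{p_1}{S}$ as the trivial length-one simple path, and no shorter nonempty suffix exists, so it is the longest. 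Either way the base case holds by inspection rather than by suffix-compatibility.

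For the inductive step, I extend by the letter $a_n$. For the first bullet: I must show no proper prefix of $a_1 \cdots a_n$ contains a word of $\out(p,S)$ as a suffix. The proper prefixes of $a_1 \cdots a_n$ are either proper prefixes of $a_1 \cdots a_{n-1}$ (handled by IH), or $a_1 \cdots a_{n-1}$ itself. So I only need: $a_1 \cdots a_{n-1}$ does not have any $\beta \in \out(p,S)$ as a suffix. Suppose it did; then $\beta \in \spath{p}{r}{S}$ for some $r \in S$, and $\beta$ is a suffix of $a_1 \cdots a_{n-1}$, hence a suffix among $\spaths{p}{S}$, so by IH bullet (b) we have $|\beta| \le |\a_{n-1}|$, and in fact $\a_{n-1}$ being the \emph{longest} such suffix and $\beta$ a suffix of it means $\beta$ is a suffix of $\a_{n-1}$. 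But now — and here is the crucial observation — $\a_{n-1}$ is a simple word from $p$ to $p_{n-1}$, i.e. a path in $M$, and since $M$ is deterministic, reading $\a_{n-1}$ from $p$ lands in $p_{n-1} \notin S$; whereas reading its suffix $\beta$... this does not immediately land in $r$, so I need a sharper argument. The right move: use the stronger "longest among all of $\spaths{p}{S}$" clause. Since $\a_{n-1}$ is the longest suffix of $a_1 \cdots a_{n-1}$ in $\spaths{p}{S}$ and $\beta \in \spaths{p}{S}$ is also a suffix of $a_1 \cdots a_{n-1}$, we get $\beta$ is a suffix of $\a_{n-1}$ — wait, that is only automatic if suffixes are totally ordered, which they are (two suffixes of a common word are comparable). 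So $\beta$ is a suffix of $\a_{n-1}$. If $\beta = \a_{n-1}$ then $p_{n-1} = r \in S$, contradicting $p_{n-1} \notin S$. If $\beta$ is a proper suffix of $\a_{n-1}$, write $\a_{n-1} = \gamma \beta$ with $\gamma \in \Sigma^+$; then along the $M$-path spelling $\a_{n-1}$ from $p$, after reading $\gamma$ we are at some state $p'$ from which reading $\beta$ reaches $p_{n-1}$; but also reading $\beta$ from $p$ reaches $r$; these need not coincide. Hmm — so the honest route is: $\beta$ as a suffix of $a_1 \cdots a_{n-1}$ which has intermediate states all outside $S$ except possibly the endpoint; the path spelling $\beta$ ends at $p_{n-1} \notin S$ (the last state of the big path), so $\beta$ cannot be a simple word ending in a state of $S$ unless its terminal state $p_{n-1} \in S$ — contradiction. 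That is the clean argument: any suffix of $a_1 \cdots a_{n-1}$, read from wherever it starts along this path, terminates at $p_{n-1}$, which is outside $S$, so it cannot lie in $\out(p,S) = \bigcup_{r \in S}\spath{p}{r}{S}$.

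For the second bullet of the inductive step: I apply suffix-compatibility to the transition $p_{n-1} \xra{a_n} p_n$. If $p_{n-1} \in S$ or $p_n \in S$: since $p_{n-1} \notin S$ (it is an intermediate state when $n \ge 2$; if $n=1$ we are in the base case), we have $p_n \in S$ — but then the bullet still needs an argument that $a_1\cdots a_n$'s longest suffix in $\spaths{p}{S}$ lies in $\spath{p}{p_n}{S}$; I would handle this by noting $\a_{n-1} a_n$ is a word spelling a path $p \to p_n$ whose only possible repeat is at the endpoints (using that $\a_{n-1}$ is simple from $p$ to $p_{n-1}$ and the edge to $p_n$), hence $\a_{n-1}a_n \in \spath{p}{p_n}{S}$, and I then argue it is the longest suffix of $a_1\cdots a_n$ by combining IH(b) with a short case analysis on whether a longer suffix could exist — a longer suffix $\delta$ of $a_1 \cdots a_n$ would have $\delta = \delta' a_n$ with $\delta'$ a suffix of $a_1\cdots a_{n-1}$ strictly longer than $\a_{n-1}$, impossible by IH(b) unless $\delta'$ is not in $\spaths{p}{S}$, which I rule out similarly. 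In the generic case $p_{n-1}, p_n \notin S$: suffix-compatibility with $\s := \a_{n-1} \in \spath{p}{p_{n-1}}{S}$ yields $\a \in \spath{p}{p_n}{S}$ which is the longest suffix of $\a_{n-1} a_n$ among $\spaths{p}{S}$; I then need to bootstrap "longest suffix of $\a_{n-1} a_n$" to "longest suffix of $w = a_1 \cdots a_n$". Any suffix $v$ of $w$ in $\spaths{p}{S}$ that is longer than $\a$: if $|v| \le |\a_{n-1} a_n| = |\a_{n-1}| + 1$ then $v$ is a suffix of $\a_{n-1} a_n$ and the suffix-compatibility conclusion caps it at $|\a|$; if $|v| > |\a_{n-1}| + 1$ then dropping the last letter, $v = v' a_n$ with $v'$ a suffix of $a_1 \cdots a_{n-1}$ and $|v'| > |\a_{n-1}|$ — but $v' $ being a suffix of $w$'s prefix, I claim $v' \in \spaths{p}{S}$ (it is a suffix of a simple-ish path; actually I must be careful — this is precisely where the wrap-around subtlety bites, and I would argue $v'$, read from $p$, follows edges of $M$ so it does correspond to \emph{some} path from $p$, but it may not be simple). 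Here the cleanest fix is to observe $v = v' a_n$ spells a path from $p$ ending at $p_n$; by definition a simple word from $p$ to $p_n$ modulo $S$ has no intermediate state in $S$ and no intermediate cycle; and a well-chosen minimal such $v$ can always be reduced to a simple one of no greater length — so WLOG the longest suffix witness is simple, reducing to the previous case.

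The main obstacle I anticipate: handling suffixes of $w$ that are \emph{longer} than the "obvious" candidate $\a_{n-1} a_n$, because such a suffix can wrap around through the start state $p$ and through states in $S$, so it is not directly controlled by a single application of suffix-compatibility. The resolution is to show any such suffix, if it lies in $\spaths{p}{S}$ at all, must actually terminate at $p_n$ (determinism of $M$), and then use minimality/simple-path reduction together with the inductive hypothesis to bound its length — essentially, long suffixes are forced to "restart" at $p$, at which point suffix-compatibility of the intervening transitions chains the bound down the path. I expect the bookkeeping here, rather than any deep idea, to be the bulk of the work, and I would lean on the first bullet (no proper prefix has an $\out(p,S)$-suffix) as a lemma within the induction to prevent premature wrap-arounds.
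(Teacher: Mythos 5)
Your induction on the length of the run, with suffix-compatibility driving the inductive step for the second bullet, is the right engine, and the ``wrap-around'' worry you flag there is actually a non-issue: if $v = v'a_n \in \spaths{p}{S}$ is a suffix of $w$ with $|v| > |\a_{n-1}|+1$, then the path witnessing $v \in \spath{p}{q}{S}$ is simple, hence so is its prefix witnessing $v'$ (a simple path repeats no state except possibly its two endpoints, so every proper prefix of it is repetition-free and avoids $S$ internally); thus $v' \in \spaths{p}{S}$ is a suffix of $a_1\cdots a_{n-1}$ longer than $\a_{n-1}$, directly contradicting the inductive hypothesis. So that half of the argument closes cleanly once you make this observation explicit.

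The first bullet is where the proof breaks. Your ``clean argument'' --- that any suffix $\b$ of $a_1\cdots a_{n-1}$, read from wherever it starts along the run, terminates at $p_{n-1}\notin S$ and hence cannot lie in $\out(p,S)$ --- conflates two different occurrences of $\b$. Membership $\b \in \spath{p}{r}{S}$ is about reading $\b$ \emph{from $p$}; being a suffix of the run-word only tells you what happens when $\b$ is read from some intermediate state $p_j$. Determinism pins down the endpoint only once the start state is fixed, so the two reads can end in different states, one in $S$ and one outside. Figure~\ref{fig:well-formed-set} gives a concrete failure: take $S=\{0,2,4\}$ (every transition is suffix-compatible w.r.t.\ $S$), $p=0$ and $w=aba$ with run $0\xra{a}1\xra{b}3\xra{a}2$, whose intermediate states $1,3$ avoid $S$; the proper prefix $ab$ ends with $b\in\spath{0}{4}{S}\subseteq\out(0,S)$, even though reading $b$ from state $1$ lands in $3\notin S$. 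So the first bullet cannot be derived from suffix-compatibility alone --- it genuinely needs well-formedness of $S$ (Definition~\ref{def:well-formed-set}), which is exactly the condition forbidding a simple word into $S$ from being a suffix of a simple word into $Q\setminus S$. With well-formedness in hand, the first bullet follows from your second bullet applied to each proper prefix $a_1\cdots a_j$: its longest suffix in $\spaths{p}{S}$ lies in $\spath{p}{p_j}{S}$ with $p_j\notin S$, so any $\b\in\out(p,S)$ occurring as a suffix of $a_1\cdots a_j$ would be a suffix of that longest suffix, violating well-formedness. As written, your argument for the first bullet purports to prove a statement that is false under the hypotheses you are using, so that step has to be replaced rather than repaired.
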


\begin{restatable}{lemma}{sfxCompatibleWordsToDfa}
  \label{lem:sfx-compatibility-words-to-paths}
  Let $S$ be a set of states such that every transition of $M$ is
  suffix-compatible w.r.t. $S$. Let $p \in S$, and $w \in \Sigma^+$ be
  a word such that no proper prefix of $w$ contains a word in
  $\out(p,S)$ as suffix. Then:
  \begin{itemize}
  \item The run of $M$ starting from $p$, is of the form
    $p \xra{w_1} p_1 \xra{w_2} p_2 \dots p_{n-1} \xra{w_n} p_n$ where
    $\{p_1, \dots, p_{n-1}\} \incl Q \setminus S$ (notice that we have
    not included $p_n$, which may or may not be in $S$).
  \item the longest suffix of $w$, among $\spaths{p}{S}$ lies in
    $\spath{p}{p_n}{S}$.
  \end{itemize}

\end{restatable}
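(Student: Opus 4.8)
The plan is to obtain this lemma as a near-converse of Lemma~\ref{lem:sfx-compatibility-dfa-paths-to-words}: the hypothesis here is exactly the first conclusion of that lemma, and once the first bullet below is established, the second bullet is an immediate instance of the second conclusion of that lemma. So the only real work is the first bullet, which I would prove by contradiction, by peeling off the shortest ``bad'' prefix of $w$.

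Concretely, write the run of $M$ from $p$ on $w$ as $p = p_0 \xra{w_1} p_1 \xra{w_2} p_2 \dots p_{n-1} \xra{w_n} p_n$, following the notation of the statement, with $w = w_1 \dots w_n$. Suppose, for contradiction, that some intermediate state lies in $S$, and let $i$ be the least index in $\{1, \dots, n-1\}$ with $p_i \in S$. By minimality of $i$, the states $p_1, \dots, p_{i-1}$ all lie in $Q \setminus S$, so the nonempty proper prefix $u := w_1 \dots w_i$ of $w$ has a run $p_0 \xra{w_1} p_1 \dots p_{i-1} \xra{w_i} p_i$ whose intermediate states avoid $S$. Applying Lemma~\ref{lem:sfx-compatibility-dfa-paths-to-words} to $u$ (and $p$) yields an $\a \in \spath{p}{p_i}{S}$ that is a suffix of $u$. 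Since $p_i \in S$ we have $\spath{p}{p_i}{S} \incl \out(p,S)$, so $u$ is a proper prefix of $w$ containing a word of $\out(p,S)$ as a suffix -- contradicting the hypothesis. Hence $\{p_1, \dots, p_{n-1}\} \incl Q \setminus S$, which is the first bullet. Now the run of $M$ on $w$ from $p$ has all its intermediate states outside $S$, so a direct application of Lemma~\ref{lem:sfx-compatibility-dfa-paths-to-words} to $w$ gives an $\a \in \spath{p}{p_n}{S}$ that is the longest suffix of $w$ among words in $\spaths{p}{S}$, which is exactly the second bullet.

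I do not expect a serious obstacle: the argument is a short bootstrap off the previous lemma and involves no induction of its own. The points to watch are purely bookkeeping -- that the chosen prefix $u$ is genuinely nonempty ($i \ge 1$, so that Lemma~\ref{lem:sfx-compatibility-dfa-paths-to-words} applies to it) and a genuine proper prefix ($i \le n-1$, so that the contradiction really is with the hypothesis as stated), and that $\out(p,S) = \bigcup_{r \in S} \spath{p}{r}{S}$, so that a simple word from $p$ ending in a state of $S$ does belong to $\out(p,S)$. If one preferred a self-contained proof, the only extra effort would be to inline the relevant half of the proof of Lemma~\ref{lem:sfx-compatibility-dfa-paths-to-words} rather than cite it.
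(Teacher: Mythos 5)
Your proof is correct, and it is the natural (and, as far as one can tell, the intended) way the paper organizes these two results: Lemma~\ref{lem:sfx-compatibility-words-to-paths} is the converse companion of Lemma~\ref{lem:sfx-compatibility-dfa-paths-to-words}, obtained by applying the latter to the shortest prefix of $w$ whose run re-enters $S$ to derive a contradiction, and then once more to all of $w$ for the second bullet. The bookkeeping points you flag (nonemptiness and properness of the prefix $u$, and $\spath{p}{p_i}{S} \incl \out(p,S)$ for $p_i \in S$) are exactly the ones that need checking, and they all go through.
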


Suffix-compatibility alone does not suffice to preserve the
language. In Figure~\ref{fig:well-formed-set}, consider
$S = \{0, 2, 4\}$. Every transition is suffix-compatible
w.r.t. $S$. The DSA induced using $S$ is shown in the middle. Notice
that it is not language equivalent, due to the word $aba$ for
instance. The run of $aba$ looks as follows: $0 \xra{ab} 4 \xra{b}
4$. The expected run was $0 \xra{aba} 2$, but that does not happen
since there is a shorter prefix with a matching transition. Even
though, we have suffix-compatibility, we need to ensure that there are
no ``conflicts'' between outgoing patterns. This leads to the next
definition.

\begin{definition}[Well-formed set]\label{def:well-formed-set}
  A set of states $S \incl Q$ is well-formed if there is no
  $p \in S, q \in S$ and $q' \notin S$, with a pair of words
  $\a \in \spath{p}{q}{S}$ (simple word to a state in $S$) and
  $\beta \in \spath{p}{q'}{S}$ (simple word to a state not in $S$)
  such that $\alpha$ is a suffix of $\beta$.
\end{definition}

We observe that the set $S=\{0,2,4\}$ is not well-formed since
$b \in \spath{0}{4}{S}, ab \in \spath{0}{3}{S}$ and $b$ is a suffix of
$ab$. Whereas $S'=\{0,2,3,4\}$ is both suffix-tracking, and
well-formed, and induces an equivalent DSA. On the word $aba$, the run
on the DSA would be $0 \xra{ab} 3 \xra{a} 2$. The first move
$0 \xra{ab} 3$ applies the longest match criterion, and the transition
since $ab$ is a longer suffix than $b$.  This was not possible before
since $3 \notin S$. It turns out that the two conditions --- suffix-compatibility and
well-formedness --- are sufficient to induce a language equivalent
DSA.

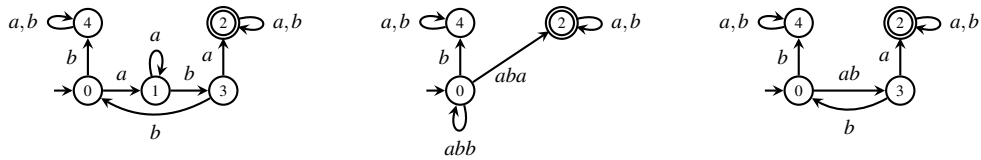
\begin{figure}[t]
  \centering
  \begin{tikzpicture}[state/.style={circle, draw, thick, inner sep =
      2pt}, scale=0.9]
    \begin{scope}[every node/.style={state}]
      \node (0) at (-0.5,0) {\tiny $\rr$}; \node (1) at (0.5, 0)
      {\tiny $\bb$}; \node (2) at (1.5, 0) {\tiny $\yy$}; \node
      [double] (3) at (1.5, 1) {\tiny $\gg$}; \node (4) at (-0.5, 1)
      {\tiny $\ww$};
    \end{scope}
    \begin{scope}[->, thick, >=stealth, auto]
      \draw (-1, 0) to (0); \draw (0) to node {\scriptsize $a$} (1);
      \draw (0) to node {\scriptsize $b$} (4); \draw (1) to [loop
      above] node {\scriptsize $a$} (1); \draw (1) to node
      {\scriptsize $b$} (2); \draw (2) to node {\scriptsize $a$} (3);
      \draw (4) to [loop left] node {\scriptsize $a, b$} (4); \draw
      (3) to [loop right] node {\scriptsize $a, b$} (3); \draw (2) to
      [bend left=30] node {\scriptsize $b$} (0);
    \end{scope}

    \begin{scope}[xshift=5 cm]
      \begin{scope}[every node/.style={state}]
        \node (0) at (0,0) {\tiny $\rr$}; \node [ double] (3) at (1.5,
        1) {\tiny $\gg$}; \node (4) at (0, 1) {\tiny $\ww$};
      \end{scope}
      \begin{scope}[->, thick, >=stealth, auto]
        \draw (-0.5, 0) to (0); \draw (0) to node [below] {\scriptsize
          $aba$} (3); \draw (0) to node {\scriptsize $b$} (4); \draw
        (0) to [loop below] node {\scriptsize $abb$} (0); \draw (4) to
        [loop left] node {\scriptsize $a, b$} (4); \draw (3) to [loop
        right] node {\scriptsize $a, b$} (3);
      
      \end{scope}

    \end{scope}

    \begin{scope}[xshift=10cm]
      \begin{scope}[every node/.style={state}]
        \node (0) at (0,0) {\tiny $\rr$}; \node [double] (3) at (1.5,
        1) {\tiny $\gg$}; \node (2) at (1.5, 0) {\tiny $\yy$}; \node
        (4) at (0, 1) {\tiny $\ww$};
      \end{scope}
      \begin{scope}[->, thick, >=stealth, auto]
        \draw (-0.5, 0) to (0); \draw (0) to node {\scriptsize $ab$}
        (2); \draw (0) to node {\scriptsize $b$} (4); \draw (2) to
        node {\scriptsize $a$} (3); \draw (4) to [loop left] node
        {\scriptsize $a, b$} (4); \draw (3) to [loop right] node
        {\scriptsize $a, b$} (3); \draw (2) to [bend left=30] node
        {\scriptsize $b$} (0);
      \end{scope}

    \end{scope}
  \end{tikzpicture}
  \caption{A DFA, a non-equivalent DSA and an equivalent induced DSA. }
  \label{fig:well-formed-set}
\end{figure}

\begin{definition}[Suffix-tracking sets]\label{def:suffix-tracking-set}
  A set of states $S \incl Q$ is suffix-tracking if it contains the
  initial and accepting states, and
  \begin{enumerate}
  \item every transition of $M$ is suffix-compatible w.r.t. $S$,
  \item and $S$ is well-formed.
  \end{enumerate}
\end{definition}

All these notions lead to the main theorem of this section.

\begin{restatable}{theorem}{suffixTrackingCorrect}
  \label{thm:suffix-tracking-is-correct}
  Let $S$ be a suffix-tracking set of complete DFA $M$, and let
  $\Aa_S$ be the DSA induced using $S$. Then: $\Ll(\Aa_S) = \Ll(M)$
\end{restatable}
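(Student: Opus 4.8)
The plan is to show that the DSA $\Aa_S$ and the DFA $M$ agree on every word $w \in \Sigma^*$, by tracking, for each word, how the run of $M$ from $q^{init}$ decomposes along the selected states $S$, and matching this with the run of $\Aa_S$. The central tool is the pair of Lemmas~\ref{lem:sfx-compatibility-dfa-paths-to-words} and~\ref{lem:sfx-compatibility-words-to-paths}: together they say that, from a state $p \in S$, the ``block'' of $M$'s run that stays outside $S$ until it first re-enters $S$ corresponds exactly to a single move of $\Aa_S$ in the sense of Definition~\ref{def:DSA-moves} --- the word consumed has no proper prefix ending in a label of $\out(p,S)$, and the longest suffix in $\spaths{p}{S}$ is a simple word into the state actually reached. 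So I would define, for a word $w$, the canonical factorization $w = v_1 v_2 \cdots v_k v_{k+1}$ where $q^{init} \xra{v_1} p_1 \xra{v_2} p_2 \cdots \xra{v_k} p_k$ is the run of $M$ with each $p_i \in S$, each $v_i$ being a maximal block staying outside $S$ in between (except endpoints), and $v_{k+1}$ the final tail that never returns to $S$ (possibly empty, and with $v_{k+1} = \e$ exactly when the $M$-run of $w$ ends in $S$).

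The main induction is on $k$ (number of visits to $S$ after the start), proving: the run of $\Aa_S$ on $w$ from $q^{init}$ is $q^{init} \xra{v_1} p_1 \xra{v_2} \cdots \xra{v_k} p_k \xra{v_{k+1}}$, with the last arrow being a ``no move'' step precisely when $v_{k+1}$ is a genuine dangling tail. For the inductive step, at state $p_{i} \in S$ having consumed $v_1 \cdots v_i$, I apply Lemma~\ref{lem:sfx-compatibility-dfa-paths-to-words} to the block $v_{i+1}$ (whose $M$-run from $p_i$ has all intermediate states outside $S$, reaching $p_{i+1}$): it gives that no proper prefix of $v_{i+1}$ ends with a label of $\out(p_i, S)$, and that some $\a \in \spath{p_i}{p_{i+1}}{S}$ is the longest suffix of $v_{i+1}$ among $\spaths{p_i}{S}$. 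To conclude that $\Aa_S$ actually moves on $v_{i+1}$ via a transition into $p_{i+1}$, I need $\a \in \out(p_i)$, i.e.\ $p_{i+1} \in S$ --- which holds by construction of the factorization --- and I need that $\a$ is the longest suffix of $v_{i+1}$ among $\out(p_i) = \out(p_i,S)$, not merely among $\spaths{p_i}{S}$. This last point is exactly where \textbf{well-formedness} enters: if some longer suffix of $v_{i+1}$ lay in $\out(p_i,S)$, it would be a simple word $\b$ into a state of $S$ of which the simple word $\a$ into a state of $S$ is a (proper) suffix, contradicting Definition~\ref{def:well-formed-set} (after reducing to a common source $p_i \in S$; one has to be a bit careful that $\b$ being a suffix of $v_{i+1}$ and longer than $\a$ forces $\a$ to be a suffix of $\b$). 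Hence $\Aa_S$ makes precisely the move $p_i \xra[\a]{~v_{i+1}~} p_{i+1}$.

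For the final block $v_{k+1}$: if the $M$-run of $w$ ends in $S$ then $v_{k+1} = \e$ and we are already at $p_k$, with $p_k \in F$ iff $w \in \Ll(M)$ (accepting states are in $S$ and retained), matching acceptance of $\Aa_S$. If $v_{k+1} \neq \e$, its $M$-run from $p_k$ never re-enters $S$; Lemma~\ref{lem:sfx-compatibility-dfa-paths-to-words} again shows no proper prefix of $v_{k+1}$ ends with a label of $\out(p_k,S) = \out(p_k)$, and since the run never reaches $S$ again, $v_{k+1}$ itself has no suffix in $\out(p_k)$ either (any such suffix would be a simple word into a state of $S$, forcing re-entry) --- so $\Aa_S$ has no move on $v_{k+1}$ or any prefix, the run ends with the ``dangling'' arrow $p_k \xra{v_{k+1}}$, and since $v_{k+1} \neq \e$ this is rejecting in $\Aa_S$; correspondingly $w$ has an $M$-run ending outside $S$, and since $F \subseteq S$, $w \notin \Ll(M)$. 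For the converse direction --- that every run of $\Aa_S$ comes from such a factorization --- I would use Lemma~\ref{lem:sfx-compatibility-words-to-paths} symmetrically: given the move $\Aa_S$ makes at $p_i$ on the next chunk, that lemma reconstructs the matching $M$-run block staying outside $S$. Since both runs are unique (for $M$ as a complete DFA, for $\Aa_S$ by the move semantics), establishing one direction of the correspondence on all words suffices, and I would present it as a single induction handling both.

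The main obstacle I anticipate is the bookkeeping around the ``longest suffix among $\out(p_i,S)$ vs.\ among $\spaths{p_i}{S}$'' gap: Lemmas~\ref{lem:sfx-compatibility-dfa-paths-to-words}/\ref{lem:sfx-compatibility-words-to-paths} are phrased in terms of $\spaths{p}{S}$ (simple words to \emph{any} state), whereas a move of $\Aa_S$ cares only about $\out(p_i) = \out(p_i,S)$ (simple words to states \emph{in $S$}). Closing this gap cleanly --- showing well-formedness rules out a longer competing suffix that lands in $S$, while the ``no proper prefix matches'' clause already handles shorter spurious matches --- is the delicate heart of the argument; the rest is a routine induction, modulo carefully stating the factorization so that the $v_i$ blocks and the terminal tail are unambiguously defined.
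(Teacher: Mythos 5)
Your overall skeleton is the same as the paper's: factor the DFA run at its visits to $S$, use Lemma~\ref{lem:sfx-compatibility-dfa-paths-to-words} to turn each block into a single move of $\Aa_S$, and use Lemma~\ref{lem:sfx-compatibility-words-to-paths} for the reverse reconstruction. However, the step you single out as ``the delicate heart'' is not where the difficulty lies, and the argument you give for it is wrong. Passing from ``$\a$ is the longest suffix of $v_{i+1}$ among $\spaths{p_i}{S}$'' to ``$\a$ is the longest suffix among $\out(p_i,S)$'' is immediate from containment: $\out(p_i,S) \subseteq \spaths{p_i}{S}$ and $\a \in \spath{p_i}{p_{i+1}}{S} \subseteq \out(p_i,S)$, so any longer competing suffix in $\out(p_i,S)$ would already contradict maximality of $\a$ in the larger set. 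No well-formedness is needed there, and the contradiction you invoke does not exist: Definition~\ref{def:well-formed-set} only forbids a simple word into a state \emph{in} $S$ from being a suffix of a simple word into a state \emph{not in} $S$; in your scenario both $\a$ and the hypothetical $\b$ lie in $\out(p_i,S)$, i.e.\ both target $S$-states, which well-formedness does not constrain at all.

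Well-formedness is genuinely needed at two other places, and at the second of these your justification fails. First, it is what rules out a \emph{premature} match: at a proper prefix $u$ of a block, the run of $M$ sits at some $p_j \notin S$, the longest suffix of $u$ among $\spaths{p_i}{S}$ goes to $p_j \notin S$, and well-formedness then forbids any shorter suffix of $u$ lying in $\out(p_i,S)$ --- this is exactly the failure in Figure~\ref{fig:well-formed-set}, where $b \in \spath{0}{4}{S}$ is a suffix of $ab \in \spath{0}{3}{S}$ and the DSA fires early on $aba$. You delegate this entirely to the first bullet of Lemma~\ref{lem:sfx-compatibility-dfa-paths-to-words} (as the paper's proof also does), so make sure you see that well-formedness, not just suffix-compatibility, is what makes that bullet hold. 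Second, for the terminal tail $v_{k+1}$ you claim that a suffix of $v_{k+1}$ in $\out(p_k,S)$ ``would force re-entry'' into $S$; this is false --- a word can have a suffix that is a simple word into an $S$-state while its run from $p_k$ stays outside $S$ (again, $ab$ read from state $0$ in Figure~\ref{fig:well-formed-set} has the suffix $b \in \spath{0}{4}{S}$ yet ends at $3 \notin S$). The correct argument is the same as for premature matches: by the second bullet of Lemma~\ref{lem:sfx-compatibility-dfa-paths-to-words} the longest suffix of $v_{k+1}$ among $\spaths{p_k}{S}$ lands in $\spath{p_k}{p_n}{S}$ with $p_n \notin S$, and well-formedness then excludes any suffix in $\out(p_k,S)$. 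With these two repairs your induction goes through and matches the paper's proof.
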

\begin{proof}
  Pick $w \in \Ll(M)$. There is an accepting run
  $q_0 \xra{w_1} q_1 \xra{w_2} \dots \xra{w_n} q_n$ of $M$ on $w$. By
  Definition~\ref{def:suffix-tracking-set}, we have $q_0, q_n \in
  S$. Let $1 \le i \le n$ be the smallest index greater than $0$, such
  that $q_i \in S$. Consider the run segment
  $q_0 \xra{w_1} q_1 \xra{w_2} \dots \xra{w_i} q_i$. By
  Lemma~\ref{lem:sfx-compatibility-dfa-paths-to-words}, and by the
  definition of induced DSA~\ref{def:induced-dsa}, no transition of
  $\Aa_S$ out of $q_0$ is triggered until $w_1 \dots w_{i-1}$, and
  then on reading $w_i$, the transition $q_0 \xra{\a} q_i$ is
  triggered, where $\a \in \spath{p}{q}{S}$, and $\a$ is also the
  longest suffix of $w_1 \dots w_i$ among $\spaths{p}{S}$. In
  particular, it is the longest suffix among outgoing labels from
  $q_0$ in $\Aa_S$. This shows there is a move
  $q_0 \xra[\a]{w_1 \dots w_i} q_i$ in $\Aa_S$. Repeat this argument
  on rest of the run
  $q_i \xra{w_{i+1}} q_{i+1} \xra{w_{i+1}} \dots \xra{w_n} q_n$ to
  extend the run of $\Aa_S$ on the rest of the word. This shows
  $w \in \Ll(\Aa_S)$.
  
  Pick $w \in \Ll(\Aa_S)$. There is an accepting run $\rho$ of $\Aa_S$
  starting at the initial state $q_0$. Consider the first move
  $q_0 \xra[\a]{w_1 \dots w_i} q_i$ of $\Aa_S$ on the word. By the
  semantics of a move (Definition~\ref{def:DSA-moves}) and
  Lemma~\ref{lem:sfx-compatibility-words-to-paths}, we obtain a run
  $q_0 \xra{w_1} q_1 \xra{w_2} \dots q_{i-1} \xra{w_i} q_i$ of $M$
  where the intermediate states $q_1, \dots, q_{i-1}$ lie in
  $Q \setminus S$. We apply this argument for each move $\rho$ in the
  accepting run of $\Aa_S$ to get an accepting run of $M$.
\end{proof}

\paragraph{Removing some useless transitions.}
Let us now get back to Figure~\ref{fig:dsa-to-dfa-eg} to see if we can
derive the DSA on the left from the DFA on the right (assuming $q$ is
the initial state). As seen earlier, the set $S = \{q, q'\}$ is
suffix tracking. It is
also well formed since $baaa$ is not a suffix of any prefix of $abaa$
and vice-versa. The DSA $\Aa_S$ induced using $q$ and $q'$ will have
the set of words in $\spath{q}{q'}{S}$ as transitions between $q$ and
$q'$. Both $abaa$ and $baaa$ belong to $\spath{q}{q'}{S}$. However,
there are some additional simple words: for instance, $abbaaa$. Notice
that $baaa$ is a suffix of $abbaaa$, and therefore even if we remove the
transition on $abbaaa$, there will be a move to $q'$ via
$q \xra{baaa} q'$. This tempts us to use only the suffix-minimal words
in the transitions of the induced DSA. This is not always safe, as we
explain below. We show how to carefully remove
``bigger-suffix-transitions''.

Consider the DSA on the left in
Figure~\ref{fig:bigger-suffix-transitions}. If $caba$ is removed, the
moves which were using $caba$ can now be replaced by $ba$ and we still
have the same pair of source and target states. Consider the picture
on the right of the same figure. There is an outgoing edge to a
different state on $aba$. Suppose we remove $caba$. The word $caba$
would then be matched by the longer suffix $aba$ and move to a
different state.
Another kind of useless transitions are some of the self-loops on
DSAs. In Figure~\ref{fig:induced-eqv}, the self-loop on $b$ at $q_0$
can be removed, without changing the language. This can be generalized
to loops over longer words, under some conditions.

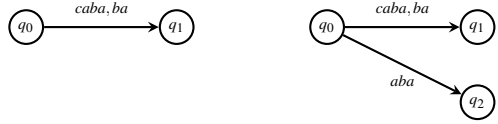
\begin{figure}[t]
  \centering
  \begin{tikzpicture}[state/.style={circle, draw, thick, inner sep =
      2pt}]
    \begin{scope}[every node/.style={state}]
      \node (0) at (0,0) {\tiny $q_0$}; \node (1) at (2,0) {\tiny
        $q_1$};
    \end{scope}
    \begin{scope}[->, >=stealth, thick, auto]
      \draw (0) to node {\tiny $caba, ba$} (1);
    \end{scope}

    \begin{scope}[xshift=4cm]
      \begin{scope}[every node/.style={state}]
        \node (0) at (0,0) {\tiny $q_0$}; \node (1) at (2,0) {\tiny
          $q_1$}; \node (2) at (2,-1) {\tiny $q_2$};
      \end{scope}
      \begin{scope}[->, >=stealth, thick, auto]
        \draw (0) to node {\tiny $caba, ba$} (1); \draw (0) to node
        [below] {\tiny $aba$} (2);
      \end{scope}
    \end{scope}
  \end{tikzpicture}
  \caption{Illustrating bigger-suffix transitions and when they are
    useless}
  \label{fig:bigger-suffix-transitions}
\end{figure}

\begin{definition}\label{def:useless-transitions}
  Let $\Aa$ be a DSA, $q, q'$ be states of $\Aa$ and
  $t:= q \xra{\a} q'$ be a transition.

  We call $t$ a \emph{bigger-suffix-transition} if there exists
  another transition $(q, \b, q')$ with $\beta$ a suffix of $\alpha$. 

  If there is a transition $t' := q \xra{\gamma} q'' ~ (q'' \neq q')$,
  such that $\beta$ is a suffix of $\gamma$, and $\gamma$ is a suffix
  of $\alpha$, we call $t$ \emph{useful}. A
  bigger-suffix-transition is called \emph{useless} if it is not
  useful.

  We will say that $t$ is a \emph{useless self-loop} if $q = q'$,
  $q$ is not an accepting state, and no suffix of $\a$ is a prefix of
  some outgoing label in $\out(q)$.
\end{definition}

In Figure~\ref{fig:bigger-suffix-transitions}, for the automaton on
the left, the transition on $caba$ is useless. Whereas for the DSA on
the right, $caba$ is a bigger-suffix-transition, but it is
useful. The
self-loop on $q_0$ in Figure~\ref{fig:induced-eqv} is
useless, but the loop on $q_0$ in Figure~\ref{fig:example-ab-bb}
is useful. Lemmas~\ref{lem:bigger-suffix-transitions-useless}
and~\ref{lem:removable-self-loop-useless} 
prove correctness of
removing useless transitions.

\begin{lemma}
  \label{lem:bigger-suffix-transitions-useless}
  Let $\Aa$ be a DSA, and let $t:= q \xra{\a} q'$ be a useless
  bigger-suffix-transition. Let $\Aa'$ be the DSA obtained by removing
  $t$ from $\Aa$. Then, $L(\Aa) = L(\Aa')$.
\end{lemma}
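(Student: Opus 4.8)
The goal is to show that removing a useless bigger-suffix-transition $t = q \xra{\a} q'$ does not change the language. Since $t$ is a bigger-suffix-transition, there is a companion transition $(q,\b,q')$ with $\b$ a suffix of $\a$. The plan is to show that every accepting run of $\Aa$ can be simulated by an accepting run of $\Aa'$ and vice versa; the ``vice versa'' direction is trivial since $\Aa'$ is a subautomaton of $\Aa$ and removing a transition can only remove runs, not create new ones (more precisely, any run of $\Aa'$ is still a valid run of $\Aa$ because deleting $t$ does not change $\out$ at any state other than possibly $q$, and at $q$ it only shrinks $\out(q)$ by removing $\a$ — but here we must be careful: shrinking $\out(q)$ could in principle enable a move in $\Aa'$ that was not a move in $\Aa$, because the ``longest suffix'' and ``no proper prefix matches'' conditions both refer to $\out(q)$). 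So actually both directions need care, and the crux is analysing how the move semantics at state $q$ changes when $\a$ is dropped from $\out(q)$.

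The key local lemma I would isolate is: for any word $w$ read at state $q$, the move taken in $\Aa$ and the move taken in $\Aa'$ lead to the \emph{same target state}, even if the triggering transition differs. Concretely, suppose in $\Aa$ the move on $w$ uses transition $(q,\gamma,q'')$ where $\gamma$ is the longest suffix of $w$ in $\out(q)$ and no proper prefix of $w$ has a suffix in $\out(q)$. In $\Aa'$ we have $\out_{\Aa'}(q) = \out(q) \setminus \{\a\}$. Case 1: $\gamma \neq \a$. Then $\gamma$ is still a suffix of $w$ in $\out_{\Aa'}(q)$; I claim it is still the longest and no proper prefix still matches. The ``no proper prefix'' part is immediate since we only removed a label. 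For ``longest'': the only way this could fail is if $\a$ was hiding some longer… no, $\a$ is not the longest (we are in the case $\gamma \neq \a$ and $\gamma$ is the $\Aa$-longest, so $|\a| \le |\gamma|$); removing a shorter-or-equal competitor cannot change the maximum. Hence the same move $(q,\gamma,q'')$ is available in $\Aa'$, with the same target. Case 2: $\gamma = \a$, i.e.\ $\a$ itself was the longest matching suffix of $w$ in $\Aa$. Since $\b$ is a suffix of $\a$ and $\a$ is a suffix of $w$, $\b$ is also a suffix of $w$; moreover in $\Aa'$ the label $\b$ is still present. Now I must determine the longest suffix of $w$ among $\out_{\Aa'}(q)$. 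Every label in $\out_{\Aa'}(q)$ is a suffix of $w$ of length $\le |\a|$ (because $\a$ was the $\Aa$-longest among $\out(q) \supseteq \out_{\Aa'}(q)$). Let $\gamma'$ be the longest such. Then $\gamma'$ is a suffix of $\a$ (two suffixes of $w$, the longer contains the shorter) with $|\b| \le |\gamma'| \le |\a|$, so $\gamma'$ is sandwiched: $\b$ is a suffix of $\gamma'$ and $\gamma'$ is a suffix of $\a$. If the target of the $\gamma'$-transition were different from $q'$, then $t$ would satisfy the definition of \emph{useful} (taking $t' = (q,\gamma',\gamma'\text{-target})$), contradicting uselessness. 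Hence the $\gamma'$-transition also goes to $q'$, the same target as $t$. Finally, ``no proper prefix of $w$ matches $\out_{\Aa'}(q)$'' holds a fortiori. So $(q,\gamma',q')$ is a move of $\Aa'$ on $w$ with target $q'$, matching the target of the removed move.

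With this local lemma, both language inclusions follow by induction on the number of moves in a run. Given an accepting run $q^{init} = p_0 \xra[\cdot]{u_1} p_1 \cdots \xra[\cdot]{u_m} p_m$ of $\Aa$, apply the local lemma at each state (it only matters at occurrences of $q$; at other states $\out$ is unchanged so the move is literally the same) to get a run of $\Aa'$ on the same word visiting the same sequence of states, hence ending at the same accepting state with no dangling suffix — so $w \in \Ll(\Aa')$. Conversely, given an accepting run of $\Aa'$ on $w$, I show it is (step for step, same states) a valid run of $\Aa$: at states $\neq q$ nothing changed; at state $q$, if the $\Aa'$-move uses some $\gamma' \in \out_{\Aa'}(q) \subseteq \out(q)$, I need that the \emph{same} transition is the move of $\Aa$ — but this can fail, since $\a$ might be a longer matching suffix in $\Aa$. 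However, in that case the $\Aa$-move is exactly the Case~2 situation analysed above and lands on the \emph{same} target $q'$ (which, by the local lemma, equals the target of the $\Aa'$-move). So the state sequence is preserved and the run is still accepting. The main obstacle, and the only genuinely delicate point, is precisely Case~2: arguing that when $\a$ was the unique longest match, the fallback label $\gamma'$ picked up in $\Aa'$ necessarily routes to the same state $q'$ — this is exactly where the hypothesis ``\emph{useless}'' (not merely bigger-suffix) is used, via the definition of \emph{useful}. Everything else is bookkeeping with the ``longest suffix'' and ``no proper prefix'' clauses of Definition~\ref{def:DSA-moves}.
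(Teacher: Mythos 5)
Your proposal is correct and follows essentially the same route as the paper's proof: both directions hinge on the observation that when $\a$ is the longest match, the surviving matching labels are sandwiched between $\b$ and $\a$, and uselessness forces the one that fires to route to the same target $q'$. If anything, your Case~2 is slightly more careful than the paper's, which asserts the replacement move in $\Aa'$ fires on $\b$ itself rather than on the longest surviving sandwiched label $\gamma'$ --- a distinction that does not affect the conclusion.
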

\begin{proof} 
  
 \emph{To show $L(\Aa) \incl L(\Aa')$.} Let $w \in L(\Aa)$ and let
  $q_0 \xra[\a_0]{w_0} q_1 \xra[\a_1]{w_1} \cdots
  \xra[\a_{m-1}]{w_{m-1}} q_m$ be an accepting run. If no
  $(q_i, \a_i, q_{i+1})$ equals $(q, \a, q')$, then the same run is
  present in $S'$, and hence $w \in L(S')$. Suppose
  $(q_j, \a_j, q_{j+1}) = (q, \a, q')$ for some $j$. So, the word
  $w_{j}$ ends with $\a$. As $(q, \a, q')$ is a
  bigger-suffix-transition, there is another $(q, \b, q')$ such that
  $\b \sfx \a$. Therefore, the word $w_j$ also ends with $\b$. Since
  there was no transition matching a proper prefix of $w_j$, the same
  will be true at $\Aa'$ as well, since it has fewer transitions. It
  remains to show that $q_j \xra[\b]{w_j} q_{j+1}$ is a move. The only
  way this cannot happen is if there is a $q \xra{\gamma} q''$ with
  $\b \sfx \gamma \sfx \a$. But this is not possible since
  $q \xra{\a} q'$ is a useless bigger-suffix transition. Therefore,
  every move using $(q, \a, q')$ in $\Aa$ will now be replaced by
  $(q, \b, q')$ in $\Aa'$. Hence we get an accepting run in $\Aa'$,
  implying $w \in L(\Aa')$.

  \emph{To show $L(\Aa') \incl L(\Aa)$}. Consider $w \in L(\Aa')$ and
  an accepting run
  $q_0 \xra[\a_0]{w_0} q_1 \xra[\a_1]{w_1} \cdots
  \xra[\a_{m-1}]{w_{m-1}} w_m$ in $\Aa'$. Notice that if
  $q \xra[\b]{w_j} q'$ is a move in $\Aa'$, the same is a move in
  $\Aa$ when $\a \not\sfx w_j$. When $\a \sfx w_j$, then the
  bigger-suffix-transition $q \xra{\a} q'$ will match and the move $q
  \xra[\b]{w_j} q'$ 
  gets replaced by $q \xra[\a]{w_j} q'$. Hence we will get the same
  run, except that some of the moves using $q \xra{\b} q'$ may get
  replaced with $q \xra{\a} q'$.
\end{proof}

For the correctness of removing useless self-loops, we assume that the
DFA that we obtain is well-formed
(Definition~\ref{def:well-formed-dsa}) and has no useless
bigger-suffix-transitions. The induced DSA that we obtain from
suffix-tracking sets is indeed well-formed. Starting from this induced
DSA, we can first remove all useless bigger-suffix-transitions, and
then remove the useless self-loops.

\begin{lemma}
  \label{lem:removable-self-loop-useless}
  Let $\Aa$ be a well-formed DSA that has no removable
  bigger-suffix-transitions. Let $t := (q, \a, q)$ be a removable
  self-loop. Then the DSA $\Aa'$ obtained by removing $t$ from $\Aa$
  satisfies $\Ll(\Aa) = \Ll(\Aa')$.
\end{lemma}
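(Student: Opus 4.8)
The plan is to prove $\Ll(\Aa) = \Ll(\Aa')$ by showing that any accepting run in one automaton can be transformed into an accepting run in the other. The easy inclusion is $\Ll(\Aa') \subseteq \Ll(\Aa)$: since $\Aa'$ has strictly fewer transitions than $\Aa$, any run of $\Aa'$ uses only transitions that are present in $\Aa$ as well. The only subtlety is that the presence of the extra self-loop $t = (q,\a,q)$ in $\Aa$ could, in principle, interfere with the ``longest match'' and ``no proper prefix matches'' clauses in the move semantics (Definition~\ref{def:DSA-moves}). So the first step is to argue that whenever a move $q \xra[\b]{w} r$ is taken in $\Aa'$ (from state $q$, possibly with $r=q$ via some \emph{other} self-loop, or $r\neq q$), the word $w$ cannot end with $\a$ and no proper prefix of $w$ can end with $\a$ either --- so the self-loop $t$ would never have been enabled at any point during this move in $\Aa$. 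Here is where the ``useless self-loop'' hypothesis bites: since no suffix of $\a$ is a prefix of any label in $\out(q)$, the automaton sitting at $q$ and reading a word while waiting for some label in $\out(q)\setminus\{\a\}$ can never be ``partway through'' matching $\a$; combined with well-formedness of $\Aa$, one rules out $\a$ being a suffix of the consumed word $w$. Hence every move of $\Aa'$ out of $q$ is verbatim a move of $\Aa$, and runs transfer directly.

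For the reverse inclusion $\Ll(\Aa) \subseteq \Ll(\Aa')$, take an accepting run $\rho$ of $\Aa$ on $w$ and show how to eliminate uses of $t$. The key claim is that $t = (q,\a,q)$ is \emph{never actually used} in any run of $\Aa$ --- i.e., there is no move $(t,v)$ that can occur in a run starting from $q^{init}$. Indeed, a move $(t,v)$ requires $\a$ to be the longest word in $\out(q)$ that is a suffix of $v$, and no proper prefix of $v$ may end with any label of $\out(q)$. I would argue that such a $v$ cannot arise: reaching $q$ and then reading letters, the very first time a suffix of the read word lands ``inside'' matching some outgoing label, that label gets fired; the useless-self-loop condition (no suffix of $\a$ is a prefix of an $\out(q)$-label) together with well-formedness forces that by the time $\a$ appears as a suffix, some proper prefix has already ended with another label of $\out(q)$, so the move $(t,v)$ is not legal. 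Therefore $\rho$ contains no occurrence of $t$, and the very same sequence of moves is a valid accepting run of $\Aa'$ (removing $t$ changes nothing about the enabled moves, again using that nothing in $\out(q)$ shares a prefix with a suffix of $\a$). This gives $w \in \Ll(\Aa')$.

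The main obstacle, and the place I expect to spend the most care, is making the claim ``$t$ is never used / never interferes'' fully rigorous. The interaction we must control is: while sitting at $q$ and reading a long word, the DSA is simultaneously ``tracking progress'' toward \emph{all} labels in $\out(q)$, and we need to guarantee that the $\a$-self-loop cannot win the race. The three hypotheses should combine as follows --- (i) $q \notin F$ ensures removing the loop cannot turn a rejecting dangling state into an accepting one or vice versa at $q$ itself; (ii) no suffix of $\a$ is a prefix of any $\out(q)$-label ensures $\a$ and the other labels ``don't overlap,'' so any occurrence of $\a$ as a suffix is preceded by a clean boundary; (iii) well-formedness of $\Aa$ (Definition~\ref{def:well-formed-dsa}, referenced but below the excerpt) is what prevents the pathological nesting that Definition~\ref{def:useless-transitions} and the discussion around Figure~\ref{fig:bigger-suffix-transitions} warn about. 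I would isolate a short sub-lemma: \emph{if $q \xra[\b]{v} r$ is any move in a DSA satisfying the hypotheses, then $\a$ is not a suffix of $v$ nor of any prefix of $v$}; once that sub-lemma is in hand, both inclusions follow almost mechanically by induction on the length of the run, replacing or deleting the step-by-step moves as described above.
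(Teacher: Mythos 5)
There is a genuine gap, and it is the central claim of your argument. You assert that the removable self-loop $t=(q,\a,q)$ ``is never actually used in any run of $\Aa$'' and, symmetrically, that ``every move of $\Aa'$ out of $q$ is verbatim a move of $\Aa$.'' Both claims are false. Take the DSA $\Aa_S$ of Figure~\ref{fig:induced-eqv}, where $\out(q_0)=\{b,ab\}$ and the self-loop $q_0\xra{b}q_0$ is exactly the paper's example of a useless self-loop. On the accepted word $bab$, the run of $\Aa_S$ is $q_0\xra[b]{b}q_0\xra[ab]{ab}q_2$: the loop fires on the very first letter, because $b$ is the longest label of $\out(q_0)$ that is a suffix of $b$ and no proper prefix of $b$ ends in a label. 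After removing the loop, the run becomes the single move $q_0\xra[ab]{bab}q_2$, which is \emph{not} a move of $\Aa_S$ (its proper prefix $b$ ends with the label $b\in\out(q_0)$). So neither inclusion is ``verbatim'': deleting the loop \emph{merges} a block of consecutive moves (some loop iterations followed by one exit move) into a single longer move, and reinstating the loop \emph{splits} a long move of $\Aa'$ into several. Your proposed sub-lemma (``$\a$ is not a suffix of $v$ nor of any prefix of $v$ for any move $q\xra[\b]{v}r$'') is likewise refuted by this example when applied to $\Aa'$, so it cannot serve as the bridge.

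The paper's proof is built around exactly this merge/split bookkeeping, and the hypotheses are consumed there rather than in showing the loop is inert. For $\Ll(\Aa)\incl\Ll(\Aa')$: given a loop move $q\xra[\a]{w_j}q$ followed by $q\xra{w_{j+1}}q_{j+2}$ (the successor exists because $q\notin F$), one must show no transition of $\Aa'$ fires on a proper prefix of $w_jw_{j+1}$; ruling out a match straddling the boundary uses well-formedness (to forbid $\a$ being a suffix of a proper prefix of $\b$) and the no-overlap condition on $\a$ (to forbid a suffix of $\a$ being a prefix of $\b$), and ruling out a shorter match at $w_j$ itself uses the absence of removable bigger-suffix-transitions. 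The converse direction performs the symmetric splitting. If you want to salvage your outline, replace the ``never used'' claim with a lemma of the form: \emph{if $q\xra[\b]{v}r$ is a move of $\Aa'$ and $v=v_1v_2$ where $v_1$ is the longest prefix of $v$ ending in $\a$ (if any), then $q\xra[\a]{v_1}q$ and, inductively, the remainder decomposes into loop moves followed by $q\xra[\b]{\cdot}r$} --- that is where all three hypotheses actually do their work.
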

\begin{proof}
 \emph{To show $\Ll(\Aa) \incl \Ll(\Aa')$}. Let $w \in \Ll(\Aa)$ and
  let $\rho:= q_0 \xra{w_0} q_1 \xra{w_1} \cdots \xra{w_{m-1}} q_m$ be
  an accepting run. Suppose $t$ matches the segment
  $q_j \xra{w_j} q_{j+1}$. Hence $q_j = q_{j+1} = q$.  Observe that as
  $q$ is not accepting, we have $j+1 \neq m$. Therefore there is a
  segment $q_{j+1} \xra{w_{j+1}} q_{j+2}$ in the run. We claim that if
  $t$ is removed, then no transition out of $q$ can match any prefix
  of $w_j w_{j+1}$.

  First we see that no prefix of $w_j$ can be
  matched, including $w_j$ itself: if at all there is a match, it
  should be at $w_j$, and a $\beta$ that is smaller than $\a$. By
  assumption, $\a$ is not a removable
  bigger-suffix-transition. Therefore, there is a transition $q
  \xra{\gamma} q'$, with $\beta \sfx \gamma \sfx \alpha$. This
  contradicts the assumption that $\alpha$ is a removable
  self-loop. Therefore there is no match upto $w_j$.

  Suppose some $(q, \b, q')$
  matches a prefix $w_j u$ such that $\b = v u$, that is, $\b$
  overlaps both $w_j$ and $w_{j+1}$. If $\a \sfx v$, then it violates
  well-formedness of $S$ since it would be a suffix of a proper prefix
  ($v$) of $\b$. This shows $v \sfx \a$ (since both are suffixes of
  $w_j$) and $v \prfx \b$, contradicting the assumption that $t$ is
  removable. Therefore, $\b$ does not overlap $w_j$. But then, if $\b$
  is a suffix of a proper prefix of $w_{j+1}$, we would not have the
  segment $q_{j+1} \xra{w_{j+1}} q_{j+2}$ in the run
  $\rho$. Therefore, the only possibility is that we have a segment
  $q_j \xra{w_j w_{j+1}} q_{j+2}$. We have fewer occurrences of the
  removable loop $(q, \a, q)$ in the modified run. Repeating this
  argument for every match of $(q, \a, q)$ gives an accepting run of
  $\Aa'$. Hence $w \in L(\Aa')$.

  \emph{To show $L(\Aa') \incl L(\Aa)$.} Let $w \in L(\Aa')$ and
  $\rho' := q_0 \xra{w_0} q_1 \xra{w_1} \cdots \xra{w_{m-1}} q_m$ be
  an accepting run in $\Aa'$. Suppose $q_j \xra{w_j} q_{j+1}$ is
  matched by $(q, \b, q')$. Let $w_j = v u $ with $\a \sfx v$. Then
  the removable-self-loop $(q, \a, q)$ will match the prefix
  $v$. Suppose $\beta$ overlaps with both $v$ and $u$, that is $\beta
  = \beta' u$. We cannot have $\alpha \sfx \beta'$ due to
  well-formedness of $\Aa$. We cannot have $\beta' \sfx \alpha$ since
  this would mean there is a suffix of $\alpha$ which is a prefix of $\beta$,
  violating the removable-self-loop condition.  Therefore,
  $\b$ is entirely inside $u$, that is, $\b \sfx
  u$. Hence in $\Aa$ the run will first start with $q \xra{v}
  q$. Applying the same
  argument, prefixes of the remaining word where 
  $t$ matches will be matched until there is a part of the word where
  $(q, \b, q')$ matches. This applies to every segment, thereby giving
  us a run in $\Aa$.
\end{proof}

We now get to the core definition of this section, which tells how to
derive a DSA from a DFA, using the methods developed so far.

\begin{definition}[DFA-to-DSA derivation] \label{def:derived-dsa}
  A DSA is said to be \emph{derived from} DFA $M$ using
  $S \subseteq Q$, if it is identical to an induced DSA of $M$
  (using $S$) with all useless transitions removed.
\end{definition}

By Theorem~\ref{thm:suffix-tracking-is-correct} and Lemma
\ref{lem:bigger-suffix-transitions-useless}, we get the following
result.

\begin{theorem}
  Every DSA that is derived from a complete DFA is language equivalent
  to it.
\end{theorem}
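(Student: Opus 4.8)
The plan is to peel the derivation apart into its two constituent operations --- passing from the DFA $M$ to the induced DSA $\Aa_S$, and then stripping away the useless transitions --- and to show that each operation leaves the accepted language unchanged.

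First I would invoke Theorem~\ref{thm:suffix-tracking-is-correct}. Since a DSA is derived from $M$ over a suffix-tracking set $S$ (Definition~\ref{def:suffix-tracking-set}, Definition~\ref{def:derived-dsa}), that theorem gives $\Ll(\Aa_S) = \Ll(M)$ for the induced DSA $\Aa_S$. I would also record that $\Aa_S$ is a well-formed DSA --- this is the remark made just before the theorem, and it holds because $S$ is a well-formed set --- since well-formedness is a hypothesis of the self-loop lemma. It then suffices to prove that deleting the useless transitions of $\Aa_S$ does not change its language.

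For this second step I would delete useless transitions one at a time, with the following priority: as long as the current DSA has a useless bigger-suffix-transition, delete one such transition; only when none remain do I delete a useless self-loop. This procedure is well defined, it terminates because every step strictly decreases the number of transitions, and it halts exactly when no useless transition is left --- that is, at a DSA derived from $M$ in the sense of Definition~\ref{def:derived-dsa}. Every step of the first kind preserves the language by Lemma~\ref{lem:bigger-suffix-transitions-useless}, which carries no side condition. Every step of the second kind preserves the language by Lemma~\ref{lem:removable-self-loop-useless}: its two hypotheses --- well-formedness and the absence of useless bigger-suffix-transitions --- hold at the moment such a step is taken, because deleting transitions never destroys well-formedness, and the priority rule forbids a self-loop deletion while any useless bigger-suffix-transition survives. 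Composing all these equalities with $\Ll(\Aa_S) = \Ll(M)$ yields $\Ll(\Aa) = \Ll(M)$ for every derived $\Aa$.

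The part I expect to need the most care is the interaction between the two deletion operations, which is precisely why I would use the priority rule rather than the naive ``remove all bigger-suffix-transitions, then all self-loops'': removing a self-loop at a state $q$ shrinks $\out(q)$, which can turn some other outgoing transition into a freshly useless bigger-suffix-transition, and, dually, removing a bigger-suffix-transition at $q$ can render a previously useful self-loop useless. I would handle this entirely through the priority rule plus the termination count: since the transition set only ever shrinks, re-establishing ``no useless bigger-suffix-transition'' before each self-loop deletion costs only finitely many extra steps, and the overall loop still terminates. Alongside this, two small invariants need checking: deleting a useless bigger-suffix-transition keeps every other useless bigger-suffix-transition useless (deletion can never create usefulness, which is witnessed by the \emph{existence} of a suitable transition), and the suffix-minimal transition between any ordered pair of states is never itself a bigger-suffix-transition, hence never deleted --- so the witness that makes a transition a bigger-suffix-transition is always available. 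With these bookkeeping facts in place the chain of language equalities is valid and the statement follows.
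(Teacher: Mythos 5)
Your proposal is correct and follows essentially the same route as the paper, which simply combines Theorem~\ref{thm:suffix-tracking-is-correct} with Lemmas~\ref{lem:bigger-suffix-transitions-useless} and~\ref{lem:removable-self-loop-useless}. Your extra care about interleaving the two kinds of deletions via a priority rule is a sound refinement of the paper's terser ``first remove all useless bigger-suffix-transitions, then the useless self-loops,'' and correctly addresses the (unstated) issue that a self-loop removal can create fresh useless bigger-suffix-transitions.
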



\section{Minimality, some observations and some challenges}
\label{sec:minim-some-observ}

Theorem~\ref{thm:comparing-dsa-with-dfa-dga} shows that we can not
expect DSAs to be smaller than (trim) DFAs or DGAs in
general. However, Lemma~\ref{lem:dsa-small} and
Figure~\ref{fig:if-else} show that there are cases where DSAs are
smaller and more readable. This motivates us to ask the question of
how we can find a minimal DSA, that is, a DSA of the smallest (total)
size. The first observation is that minimal DSAs need not be unique
--- see Figure~\ref{fig:no-canonical}. The next simple observation
is that a minimal DSA will not have useless transitions
since removing them gives an equivalent DSA with strictly smaller
size.  In fact, we can assume a certain well-formedness condition on
the minimal DSAs, in the same spirit as the definition of well-formed
sets in our derivation procedure: if there are two transitions
$q \xra{\a} q_1$ and $q \xra{\b_1 \a \b_2} q_2$, then we can remove
the second transition since it will never get fired.

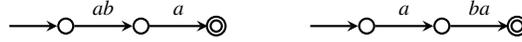
\begin{figure}[t]
  \centering
  \begin{tikzpicture}[state/.style={circle, draw, thick, inner sep =
      2pt}]
    \begin{scope}[every node/.style={state}]
      \node (0) at (0,0) {}; \node (1) at (1, 0) {}; \node [double]
      (2) at (2, 0) {};
    \end{scope}
    \begin{scope}[->,>=stealth, thick, auto]
      \draw (-0.75, 0) to (0); \draw (0) to node {\scriptsize $ab$}
      (1); \draw (1) to node {\scriptsize $a$} (2);
    \end{scope}
    \begin{scope}[xshift=4cm]
      \begin{scope}[every node/.style={state}]
        \node (0) at (0,0) {}; \node (1) at (1, 0) {}; \node [double]
        (2) at (2, 0) {};
      \end{scope}
      \begin{scope}[->,>=stealth, thick, auto]
        \draw (-0.75, 0) to (0); \draw (0) to node {\scriptsize $a$}
        (1); \draw (1) to node {\scriptsize $ba$} (2);
      \end{scope}
    \end{scope}
  \end{tikzpicture}
  \caption{Minimal DSA is not unique}
  \label{fig:no-canonical}
\end{figure}

\begin{definition}[Well-formed DSA]\label{def:well-formed-dsa}
  A DSA $\Aa$ is \emph{well-formed} if for every state $q$, no
  outgoing label $\a \in \out(q)$ is a suffix of some proper prefix
  $\b'$ of another outgoing label $\beta \in \out(q)$.
\end{definition}

Any transition violating well-formedness can be removed, without
changing the language. Therefore, we can safely assume that minimal
DSAs are well-formed. Due to the ``well-formedness'' property in
suffix-tracking sets, the DSAs induced by suffix-tracking sets are
naturally well-formed. Since removing useless transitions preserves
this property, the DSAs that are derived using our DFA-to-DSA
procedure (Definition~\ref{def:derived-dsa}) are well-formed. The next
proposition shows that every DSA that is well-formed and has no
useless transitions (and in particular, the minimal
DSAs) can be derived from the corresponding tracking DFAs.

\begin{restatable}{proposition}{dsaDerivableFromTracking}
  \label{prop:dsa-derivable-from-tracking-dfa}
  Every well-formed DSA with no useless transitions can
  be derived from its tracking DFA.
\end{restatable}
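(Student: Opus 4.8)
The plan is to apply the DFA-to-DSA derivation (Definition~\ref{def:derived-dsa}) to the tracking DFA $M_\Aa$ of Definition~\ref{def:tracking-dfa} with the set of ``portal'' states $S := \{(q,\e)\mid q\in Q^\Aa\}$ (taking $\e\in\outp(q)$ for every $q$), to show $S$ is suffix-tracking, and then to show the DSA so derived is literally $\Aa$ under the bijection $q\mapsto(q,\e)$. First, $S$ contains the initial state $(q_{in}^\Aa,\e)$ of $M_\Aa$ and all its accepting states, as Definition~\ref{def:suffix-tracking-set} requires. The shape of $M_\Aa$ makes the bookkeeping manageable: its states split into gadgets $G_q=\{(q,\b)\mid\b\in\outp(q)\}\cup\{q_{copy}^q\}$, one per $q\in Q^\Aa$, and the only edges leaving $G_q$ are ``move'' edges $(q,\b)\xra{a}(q',\e)$ (and $q_{copy}^q\xra{a}(q',\e)$) that land on the portal of another gadget. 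Hence a simple word modulo $S$ out of a portal $(q,\e)$ stays inside $G_q$, and either ends at a non-portal of $G_q$ having triggered no move of $M_\Aa$, or triggers exactly one move on its last letter and ends at a portal $(q',\e)$, in which case it ends with the label $\delta$ that is its longest $\outp(q)$-suffix, with $(q,\delta,q')\in\Delta^\Aa$. In particular, reading a label $\a$ of $(q,\a,q')\in\Delta^\Aa$ from $(q,\e)$ walks the ``spine'' $(q,\e),(q,\a_1),\dots$ and then moves to $(q',\e)$, so $\{\a:(q,\a,q')\in\Delta^\Aa\}\incl\spath{(q,\e)}{(q',\e)}{S}$.

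Well-formedness of $\Aa$ (Definition~\ref{def:well-formed-dsa}) enters through one observation: \emph{a word read from $(q,\e)$ triggers a move of $M_\Aa$ iff some prefix of it has a label of $\out(q)$ as a suffix}. Indeed the tracked value after a prefix is its longest suffix in $\outp(q)$; if that prefix also had a strictly shorter suffix $\delta\in\out(q)$, then $\delta$ would be a suffix of a proper prefix of another label, which well-formedness forbids --- so the tracked value is exactly $\delta$ and a move fires. (The same argument shows the spine walk for any $\m\in\outp(q)\setminus\out(q)$ triggers no move, since no proper prefix of $\m$ lies in $\out(q)$.) Well-formedness of $S$ (Definition~\ref{def:well-formed-set}) is now immediate: a simple word $\a\in\spath{(q,\e)}{(q',\e)}{S}$ into a portal ends with a label of $\out(q)$, whereas a simple word $\b$ from $(q,\e)$ into a non-portal triggers no move, so no prefix of $\b$, and in particular $\b$ itself, ends with a label; were $\a\sfx\b$ we would conclude $\b$ ends with that label --- a contradiction.

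For suffix-compatibility of an arbitrary transition $x\xra{a}y$ of $M_\Aa$ (Definition~\ref{def:suffix-compatibility}), the case with $x\in S$ or $y\in S$ is trivial, so take $x,y$ non-portals, necessarily lying in one gadget $G_{q_1}$ whose only ``entry portal'' is $(q_1,\e)$ (if $x$ is unreachable from $(q_1,\e)$ modulo $S$ the condition is vacuous). Fix $\s\in\spath{(q_1,\e)}{x}{S}$; reading $\s a$ from $(q_1,\e)$ stays in $G_{q_1}$ and triggers no move, so no prefix of $\s a$ ends with a label, and the tracked value at $y$ equals $\m$, the longest suffix of $\s a$ in $\outp(q_1)$. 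Any simple word $\gamma$ out of $(q_1,\e)$ that is a suffix of $\s a$ likewise triggers no move, hence ends at the $G_{q_1}$-state whose tracked value is the longest suffix of $\gamma$ in $\outp(q_1)$, and once $|\gamma|\ge|\m|$ this equals $\m$ (both are suffixes of $\s a$), so such $\gamma$ ends at $y$. Finally the longest suffix of $\s a$ that lies in $\spaths{(q_1,\e)}{S}$ has length at least $|\m|$: if $\m\ne\e$, the spine walk for $\m$ is a simple word from $(q_1,\e)$ to $y$ with $\m\sfx\s a$; if $\m=\e$, the last letter $a$ of $\s a$ is not a prefix of any label --- exactly the condition producing the edge into $q_{copy}$ --- so $a$ alone is a simple word from $(q_1,\e)$ to $q_{copy}^{q_1}=y$ with $a\sfx\s a$. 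That longest suffix is therefore the witness demanded by Definition~\ref{def:suffix-compatibility}. Hence $S$ is suffix-tracking, and Theorem~\ref{thm:suffix-tracking-is-correct} with Lemma~\ref{lem:tracking-dfa-language-equivalent} gives $\Ll(\Aa_S)=\Ll(M_\Aa)=\Ll(\Aa)$ for the induced DSA $\Aa_S$.

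It remains to show that deleting useless transitions from $\Aa_S$ returns exactly $\Aa$. Each transition of $\Aa_S$ from portal $q$ to portal $q'$ is a word $\s\in\spath{(q,\e)}{(q',\e)}{S}$ whose longest $\outp(q)$-suffix is a label $\delta$ with $(q,\delta,q')\in\Delta^\Aa$; combined with $\{\a:(q,\a,q')\in\Delta^\Aa\}\incl\spath{(q,\e)}{(q',\e)}{S}$ this says the transitions of $\Aa_S$ are precisely those of $\Aa$ together with strictly longer words $\s$, each of which has a genuine $\Aa$-label $\delta$ with the same source and target as a proper suffix. Such an $\s$ is a bigger-suffix-transition, and is \emph{useless}: a ``useful'' witness would be a transition $q\xra{\gamma}q''$ of $\Aa_S$ with $q''\ne q'$ and $\delta\sfx\gamma\sfx\s$, but any such $\gamma$, being a simple word ending at a portal, fires exactly one move --- on its longest $\outp(q)$-suffix, which is forced to be $\delta$ since $\delta$ is already the longest $\outp(q)$-suffix of $\s$ and $\delta\sfx\gamma$ --- so $q''$ is the target of $\delta$, namely $q'$, a contradiction. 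Conversely an $\Aa$-label that is a bigger-suffix-transition in $\Aa_S$ is already one in $\Aa$, hence useful in $\Aa$ (no useless transitions), and the same witness --- itself an $\Aa$-label, hence in $\Aa_S$ --- keeps it useful in $\Aa_S$; so no $\Aa$-label is deleted. Removing the extra words thus turns $\Aa_S$ into $\Aa$, and since $\Aa$ has no useless transitions, in particular no useless self-loops, the self-loop-removal step deletes nothing further. I expect the suffix-compatibility argument above to be the main obstacle: the set of simple words out of a state is not closed under taking suffixes, which is why the witness has to be extracted as ``the longest suffix of $\s a$ in $\spaths{(q_1,\e)}{S}$'' rather than guessed directly, and the $q_{copy}$ states must be folded into the same case analysis; verifying that each appeal to well-formedness of $\Aa$ is legitimate is a secondary concern.
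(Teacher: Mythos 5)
Your proposal is correct and follows what is essentially the only natural route, which is also the paper's: take $S$ to be the set of states $(q,\e)$ of the tracking DFA, verify that $S$ is suffix-tracking (with well-formedness of $\Aa$ supplying the key fact that the tracked longest $\outp(q)$-suffix is a label exactly when some suffix is a label), and check that the induced DSA consists of the original labels plus useless bigger-suffix-transitions, so that removing useless transitions recovers $\Aa$. The details you flag as delicate --- the witness for suffix-compatibility being extracted as the longest $\spaths{(q,\e)}{S}$-suffix, the $q_{copy}$ case, and the check that no genuine label of $\Aa$ becomes useless inside the induced DSA --- are handled correctly.
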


Proposition~\ref{prop:dsa-derivable-from-tracking-dfa} says that if we
somehow had access to the tracking DFA of a minimal DSA, we will be
able to derive it using our procedure. The challenge however is that
this tracking DFA may not necessarily be the canonical DFA for the
language. In fact, we now show that a smallest DSA that can be derived
from the canonical DFA need not be a minimal DSA.

\begin{figure}
  \begin{tikzpicture}[state/.style={circle, draw, thick, inner sep =
      2pt}]
    \begin{scope}[every node/.style={state}]
      \node (0) at (0,0) {\tiny $q_0$}; \node (1) at (2,1) {\tiny
        $q_1$}; \node (2) at (4,1) {\tiny $q_2$}; \node [double] (3)
      at (4, 0) {\tiny $q_4$}; \node (45) at (3,-1) {\tiny $p$};
    \end{scope}

    \begin{scope}[->,>=stealth, thick, auto]
      \draw (-0.75, 0) to (0); \draw (0) to [loop above] node [left]
      {\tiny $\Sigma \setminus \{a, b\}$} (0); \draw (0) to node
      [below] {\tiny $a$} (1); \draw (1) to [bend right=30] node
      [left] {\tiny $\Sigma \setminus \{a, b\}$} (0); \draw (1) to
      [loop above] node {\tiny $a$} (1); \draw (1) to node {\tiny $b$}
      (2); \draw (2) to [bend left=20] node [below, near start] {\tiny
        $\Sigma \setminus a$} (0); \draw (2) to node {\tiny $a$} (3);
      \draw (0) to node {\tiny $b$} (45);
      
      \draw (45) to node {\tiny $b$} (3);
 
      \draw (45) to 
      [bend left] node {\tiny $\Sigma \setminus \{a,b\}$} (0); \draw
      (45) to [loop below] node {\tiny $a$} (45); \draw (3) to [loop
      right] node {\tiny $\Sigma$} (3);
    \end{scope}

    \begin{scope}[xshift=8cm]
      \begin{scope}[every node/.style={state}]
        \node (0) at (0,0) {\tiny $q_0$}; \node (2) at (4,1) {\tiny
          $q_2$}; \node [double] (3) at (4, 0) {\tiny $q_4$}; \node
        (45) at (3,-1) {\tiny $p$};
      \end{scope}
      
      \begin{scope}[->, >=stealth, thick, auto]
        \draw (-0.75, 0) to (0);
        
        \draw (0) to [bend left] node {\tiny $ab$} (2);
        
        \draw (2) to [bend left=20] node [below, near start] {\tiny
          $\Sigma \setminus a$} (0); \draw (2) to node {\tiny $a$}
        (3); \draw (0) to node {\tiny $b$} (45);
        
        \draw (45) to node {\tiny $b$} (3);
 
        \draw (45) to 
        [bend left] node {\tiny $\Sigma \setminus \{a, b\}$}
        (0); 
        \draw (3) to [loop right] node {\tiny $\Sigma$} (3);
      \end{scope}
    \end{scope}
  \end{tikzpicture}
  \caption{DFA $M^*$ on the left and a derived DSA $\Aa^*_S$ with
    $S = \{q_0, q_2, q_4, p\}$ on the right.}
  \label{fig:dfa-m-star-dsa}
\end{figure}
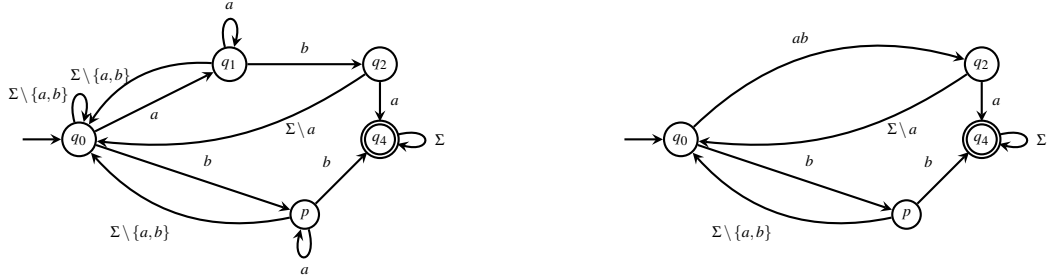

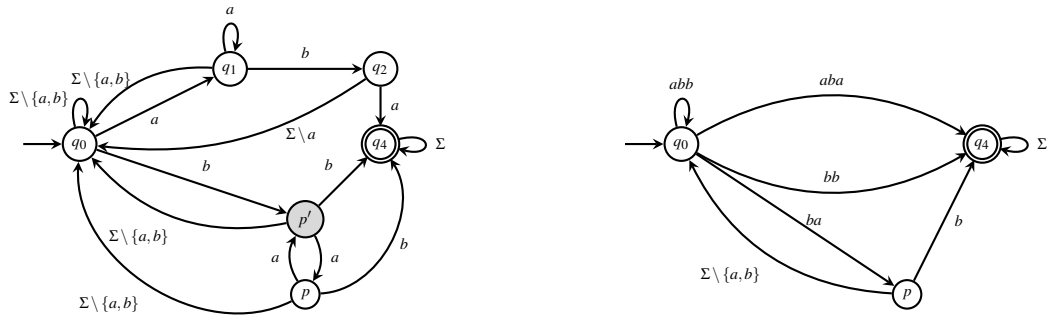
\begin{figure}
  \centering
  \begin{tikzpicture}[state/.style={circle, draw, thick, inner sep =
      2pt}]
    \begin{scope}[every node/.style={state}]
      \node (0) at (0,0) {\tiny $q_0$}; \node (1) at (2,1) {\tiny
        $q_1$}; \node (2) at (4,1) {\tiny $q_2$}; \node [double] (3)
      at (4, 0) {\tiny $q_4$}; \node [fill=gray!30] (4) at (3,-1)
      {\tiny $p'$}; \node (5) at (3,-2) {\tiny $p$};
    \end{scope}
    \begin{scope}[->,>=stealth, thick, auto]
      \draw (-0.75, 0) to (0); \draw (0) to [loop above] node [left]
      {\tiny $\Sigma \setminus \{a, b\}$} (0); \draw (0) to node
      [below] {\tiny $a$} (1); \draw (1) to [bend right=30] node
      [left] {\tiny $\Sigma \setminus \{a, b\}$} (0); \draw (1) to
      [loop above] node {\tiny $a$} (1); \draw (1) to node {\tiny $b$}
      (2); \draw (2) to [bend left=20] node [below, near start] {\tiny
        $\Sigma \setminus a$} (0); \draw (2) to node {\tiny $a$} (3);
      \draw (0) to node {\tiny $b$} (4); \draw (4) to [bend left] node
      {\tiny $a$} (5); \draw (5) to [bend left] node {\tiny $a$} (4);
      \draw (4) to node {\tiny $b$} (3); \draw (5) to [bend
      right=60]node [right] {\tiny $b$} (3); \draw (4)
      to 
      [bend left] node {\tiny $\Sigma \setminus \{a,b\}$} (0); \draw
      (5) to [bend left=60] node {\tiny $\Sigma \setminus \{a,b\}$}
      (0); \draw (3) to [loop right] node {\tiny $\Sigma$} (3);
    \end{scope}

    \begin{scope}[xshift=8cm]
      \begin{scope}[every node/.style={state}]
        \node (0) at (0,0) {\tiny $q_0$}; \node [double] (3) at (4, 0)
        {\tiny $q_4$}; \node (5) at (3,-2) {\tiny $p$};
      \end{scope}
      \begin{scope}[->, >=stealth, thick, auto]
        \draw (-0.75, 0) to (0); \draw (0) to [bend left] node {\tiny
          $aba$} (3); \draw (0) to [bend right] node {\tiny $bb$} (3);
        \draw (0) to node [right] {\tiny $ba$} (5); \draw (5) to node
        [right] {\tiny $
          b$} (3); \draw (5) to [bend left] node {\tiny
          $\Sigma\setminus \{ a, b\}$} (0); \draw (0) to [loop above]
        node {\tiny $abb$} (0); \draw (3) to [loop right] node {\tiny
          $\Sigma$} (3);
      \end{scope}
    \end{scope}
  \end{tikzpicture}
  \caption{DFA $M^{**}$ on the left and a derived DSA $\Aa^{**}_S$
    with $S = \{q_0, q_2, q_4, p\}$ on the right.}
  \label{fig:m-star-star}
\end{figure}

Figure~\ref{fig:dfa-m-star-dsa} shows a DFA $M^*$. Observe that $M^*$
is minimal: every pair of states has a distinguishing suffix. Let us
now look at DSAs that can be derived from $M^*$. Firstly, any
suffix-tracking set on $M^*$ would contain $q_0, q_4$ (since they are
initial and accepting states). If $p$ is not picked, the transition
$p \xra{a} p$ is not suffix-compatible. Therefore, $p$ should belong
to the selected set. If $p$ is picked, and $q_2$ not picked, then the
set is not well-formed (see Definition~\ref{def:well-formed-set}): the
simple word $b$ from $q_0$ to $p$ is a suffix of the simple word $ab$
to $q_2$. Therefore, any suffix-tracking set should contain the $4$
states $q_0, p, q_2, q_4$. This set $S = \{q_0, p, q_2, q_4\}$ is
indeed suffix-tracking, and the DSA derived using $S$ is shown in the
right of Figure~\ref{fig:dfa-m-star-dsa}. The only other
suffix-tracking set is the set $S'$ of all states. The DSA derived
using $S'$ will have state $q_1$ in addition, and the transitions
$\Sigma \setminus \{a, b\}$. If $\Sigma$ is sufficiently large, this
DSA would have total size bigger than $\Aa^*_S$. We deduce $\Aa^*_S$
to be the smallest DSA that can be derived from $M^*$.

Figure~\ref{fig:m-star-star} shows DFA $M^{**}$ which is obtained from
$M^*$ by duplicating state $p$ to create a new state $p'$, which is
equivalent to $p$. So $M^{**}$ is language equivalent to $M^*$, but it
is not minimal. Here, if we choose $p$ in a suffix-tracking set, the
simple word to $p$ is $ba$, which is not a suffix of $ab$ (the simple
word to $q_2$). Hence, we are not required to add $q_2$ into the
set. Notice that $S = \{q_0, p, q_4\}$ is indeed a suffix-tracking set
in $M^{**}$. The derived DSA $\Aa^{**}_S$ is shown in the right of the
figure. The ``heavy'' transition on $\Sigma \setminus a$
disappears. There are some extra transition, like $q_0 \xra{bb} q_4$,
but if $\Sigma$ is large enough, the size of $\Aa^{**}_S$ will be
smaller than $\Aa^*_S$. This shows that starting from a big DFA helps
deriving a smaller DSA, and in particular, the canonical DFA of a
regular language may not derive a minimal DSA for the language.


\section{Complexity of minimization}
\label{sec:complexity}

The goal of this section is to prove the following theorem.

\begin{theorem}
  \label{thm:np-complete}
  Given a DFA $M$ and positive integer $k$, deciding whether there
  exists an equivalent DSA of total size $\le k$ equivalent to $M$ is
  NP-complete.
\end{theorem}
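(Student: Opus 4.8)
The plan is to establish both membership in $\NP$ and $\NP$-hardness separately.

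\textbf{Membership in $\NP$.} First I would argue that a witness (a DSA $\Aa$ of total size $\le k$) can be guessed and verified in polynomial time. The subtle point is that $k$ is given in unary-free form, so I must first establish a polynomial bound on the size of a candidate DSA relative to the input DFA $M$. By Proposition~\ref{prop:dsa-derivable-from-tracking-dfa}, any minimal (hence well-formed, no-useless-transition) DSA can be derived from its tracking DFA; and by Theorem~\ref{thm:comparing-dsa-with-dfa-dga}, a DSA of size $n_S$ yields a tracking DFA of size at most $2n_S(1+2|\Sigma|)$, so in particular $n_S$ is polynomially related to the size of the minimal DFA, which is $\le |M|$. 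This means we only ever need to consider $k$ up to some $\mathrm{poly}(|M|)$; for larger $k$ the answer is trivially ``yes'' (take $M$ itself as a DSA). Hence a candidate DSA has total size polynomial in $|M|$ and can be guessed in polynomial time. For verification, I would use the tracking-DFA construction (Definition~\ref{def:tracking-dfa}): build $M_\Aa$ in polynomial time, and check $\Ll(M_\Aa) = \Ll(M)$ by the standard product-construction/emptiness-of-symmetric-difference test, which is polynomial. This gives membership in $\NP$.

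\textbf{$\NP$-hardness.} I would reduce from \vertex (or possibly directly exploit the ``conflict'' structure hinted at in Section~\ref{sec:minim-some-observ}). The intuition driving the reduction is the phenomenon illustrated by Figures~\ref{fig:dfa-m-star-dsa} and~\ref{fig:m-star-star}: whether a state can be left out of a suffix-tracking set depends on well-formedness constraints (a simple word to an included state must not be a suffix of a simple word to an excluded state) and suffix-compatibility constraints (states with problematic self-loops are forced in). These constraints couple the inclusion decisions for different states, exactly the kind of combinatorial dependency that encodes a covering/independent-set problem. So given a graph $G = (V, E)$, I would build a DFA $M_G$ — over a suitably sized alphabet, with a ``gadget'' state for each vertex and edge-gadgets enforcing that for every edge $\{u,v\}$ at least one of the corresponding states is forced into every suffix-tracking set (or conversely, can be suppressed only if its neighbors are kept) — so that DSAs equivalent to $M_G$ of total size $\le k$ correspond precisely to vertex covers of size $\le t$ for an appropriate threshold. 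The heavy, alphabet-sized transitions appearing when a ``bad'' state is retained (as with the $\Sigma \setminus a$ transition in Figure~\ref{fig:dfa-m-star-dsa}) provide the size penalty that makes the optimization genuinely about minimizing the number of retained states.

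\textbf{Main obstacle.} The hard part is the hardness direction: designing the DFA gadgets so that (i) the only suffix-tracking sets of $M_G$ that yield small DSAs are in bijection with vertex covers of $G$, and (ii) no ``unexpected'' DSA — one not arising as an induced-and-pruned DSA from $M_G$ via Definition~\ref{def:derived-dsa}, or one derived from a \emph{non-canonical} equivalent DFA — can beat the threshold. Point (ii) is especially delicate precisely because of the observation in Section~\ref{sec:minim-some-observ} that enlarging the DFA can shrink the DSA; I would need the gadget to be ``rigid'' enough that duplicating or refining states never helps below the threshold $k$. Controlling this likely requires a careful accounting argument: lower-bounding the total size of \emph{any} equivalent DSA in terms of structural features of $\Ll(M_G)$ (e.g., forced distinct outgoing patterns, forced path lengths), independent of which DFA it was derived from, and matching that lower bound against the vertex-cover value. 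Managing the interaction between label lengths, edge counts, and state counts in this lower bound is where the technical weight of the proof will lie.
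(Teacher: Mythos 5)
Your $\NP$-membership argument is essentially the paper's: guess a DSA of total size $\le k$ (noting the answer is trivially ``yes'' once $k$ reaches the size of $M$), build its tracking DFA, and test equivalence in polynomial time. The hardness direction, however, is only a plan, and the two places where you defer the work are exactly where the proof's content lies. First, the gadget: the paper's $M_G$ has one state per vertex plus $q^{init}, q^{acc}, q^{sink}$, alphabet $V \cup E \cup \{\$\} \cup \{1,\dots,\Delta\}$ with $\Delta$ a large polynomial in $|V|,|E|$, transitions $u \xra{e} v$ and $v \xra{e} u$ for each graph edge $e=(u,v)$, and all $\Delta$ padding letters sent from every state to the sink. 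The padding letters are the decisive device: each state retained in a suffix-tracking set costs about $2\Delta$ in total size while all remaining transitions fit within $2\Delta - 1$, so minimizing total size collapses to minimizing the number of retained states; and for an edge $e=(u,v)$, suppressing both endpoints breaks suffix-compatibility of $v \xra{e} u$, because the longest simple-word suffix of $uee$ out of $q^{init}$ consists only of $e$'s and leads to the sink rather than to $u$. You gesture at this structure but do not construct it, so even the forward direction of the reduction is not established.

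The obstacle you flag --- that an equivalent DSA need not arise from $M_G$ at all, and that refining the DFA can shrink the derived DSA --- is real, but the paper does not resolve it by a generic accounting lower bound over all equivalent DSAs. Instead it invokes Proposition~\ref{prop:dsa-derivable-from-tracking-dfa}: a candidate DSA may be assumed well-formed with no useless transitions, hence is derived from its own tracking DFA $M$, and since $M_G$ is minimal, $M$ is a refinement of $M_G$. One then shows that for each edge $(u,v)$, any suffix-tracking set of any such refinement must contain some copy of $u$ or $v$: the run on $v e^{i+j+1}$ must revisit one of the finitely many copies, and the revisiting $e$-transition cannot be suffix-compatible when all copies are suppressed, for the same ``longest suffix is all $e$'s and goes to the sink'' reason as above. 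Each retained copy still carries the $\Delta$ padding transitions, so the size count survives refinement. Without this lemma (or a genuine substitute) the backward direction of your reduction does not go through; as it stands the hardness half is a correct strategy with its key step missing.
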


If $k$ is bigger than the size of the DFA $M$, then the answer is
trivial. Therefore, let us assume that $k$ is smaller than the DFA
size. For the $\NP$ upper bound, we guess a DSA of total size $k$,
compute its tracking DFA in time $\mathcal{O}(k \cdot |\Sigma|)$ and
check for its language equivalence with the given DFA $M$. This can be
done in polynomial-time by minimizing both the DFA and checking for
isomorphism.

The rest of the section is devoted to proving the lower bound.  We
provide a reduction from the minimum vertex cover problem which is a
well-known $\NP$-complete problem~\cite{DBLP:conf/coco/Karp72}. A
vertex cover of an undirected graph $G = (V, E)$ is a subset
$S \subseteq V$ of vertices, such that for every edge $e \in E$, at
least one of its end points is in $S$. The decision problem takes a
graph $G$ and a number $k' \ge 1$ as input and asks whether there is a
vertex cover of $G$ with size at most $k'$. Using the graph $G$, we
will construct a DFA $M_G$ over an alphabet $\Sigma_G$. We then show
that $G$ has a vertex cover of size $\le k'$ iff $M_G$ has an
equivalent DSA with total size $\le k$ where
$k = (k'+2)\times 2\Delta + (2 \Delta - 1)$. Here, $\Delta$ is a
sufficiently large polynomial in $|V|, |E|$ which we will explain
later. 

  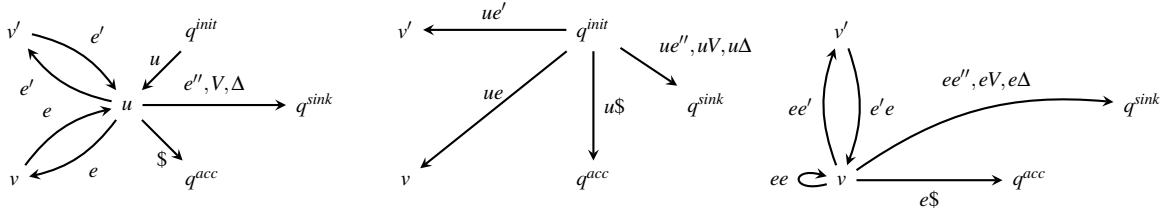
\begin{figure}
    \centering
    \begin{tikzpicture}
      \begin{scope}
        \node (u) at (0,0) {\scriptsize $u$}; \node (v) at (-1.5, -1)
        {\scriptsize $v$}; \node (init) at (1, 1) {\scriptsize
          $q^{init}$}; \node (acc) at (1, -1) {\scriptsize $q^{acc}$};
        \node (sink) at (2.5, 0) {\scriptsize $q^{sink}$}; \node (v')
        at (-1.5, 1) {\scriptsize $v'$};
      \end{scope}
      \begin{scope}[->,>=stealth, thick, auto]
        \draw (v) to [bend left = 20] node {\scriptsize $e$} (u);
        \draw (u) to [bend left = 20] node {\scriptsize $e$} (v);
        \draw (init) to node [left, near start] {\scriptsize $u$} (u);
        \draw (u) to node {\scriptsize $e'', V, \Delta$} (sink); \draw
        (u) to node [below] {\scriptsize $\$$} (acc);

        \draw (v') to [bend left=20] node {\scriptsize
          $e'$} (u); \draw (u) to [bend left=20] node [near end]
        {\scriptsize $e'$} (v');
      \end{scope}

      \begin{scope}[xshift=5.2cm]
        \begin{scope}
          \node (v) at (-1.5, -1) {\scriptsize $v$}; \node (init) at
          (1, 1) {\scriptsize $q^{init}$}; \node (acc) at (1, -1)
          {\scriptsize $q^{acc}$}; \node (sink) at (2.5, 0)
          {\scriptsize $q^{sink}$}; \node (v') at (-1.5, 1)
          {\scriptsize $v'$};
        \end{scope}
       
      \begin{scope}[->,>=stealth, thick, auto]
        \draw (init) to node [above] {\scriptsize $ue$} (v); \draw
        (init) to node [above] {\scriptsize $ue'$} (v'); \draw (init)
        to node {\scriptsize $u \$$} (acc); \draw (init) to node
        {\scriptsize $ue'', uV, u\Delta$} (sink);
      \end{scope}
    \end{scope}

    \begin{scope}[xshift=11cm]
      \node (v) at (-1.5, -1) {\scriptsize $v$};
      \node (acc) at (1, -1) {\scriptsize
        $q^{acc}$}; \node (sink) at (2.5, 0) {\scriptsize
        $q^{sink}$}; \node (v') at (-1.5, 1) {\scriptsize $v'$};
    \end{scope}

      \begin{scope}[->,>=stealth, thick, auto]
        \draw (v) to [bend left = 20] node {\scriptsize $ee'$} (v');
        \draw (v') to [bend left=20] node {\scriptsize $e'e$} (v);
        \draw (v) to node [below] {\scriptsize $e\$$} (acc); \draw (v)
        to [bend left=20] node [near end] {\scriptsize $ee'', eV,
          e\Delta$} (sink); \draw (v) to [loop left] node {\scriptsize
          $ee$} (v);
      \end{scope}
    \end{tikzpicture}
    \caption{Left: Illustration of the neighbourhood of state
      $u$ in the DFA
      $M_G$. Middle, Right: Transitions induced from
      $q^{init}$ and $v$, on removing $u$.}
    \label{fig:complexity-mg}
  \end{figure}

  The alphabet $\Sigma_G$ is given by $V \cup E \cup \{ \$\} \cup
  D$ where $D = \{1,2, \dots, \Delta\}$. States of $M_G$ are $V \cup
  \{q^{init}, q^{sink},
  q^{acc}\}$. For simplicity, we use the same notation for
  $v$ as a vertex in $G$, $v$ as a letter in $\Sigma_G$ and
  $v$ as a state of $M_G$. The actual role of
  $v$ will be clear from the context. For every edge $e = (u,
  v)$, there are two transitions in the automaton: $u \xra{e}
  v$ and $v \xra{e} u$. For every $v \in
  V$, there are transitions $q^{init} \xra{v} v$ and $v \xra{\$}
  q^{acc}$. This automaton can be completed by adding all missing
  transitions to the sink state
  $q^{sink}$. Figure~\ref{fig:complexity-mg} (left) illustrates the
  neighbourhood of a state $u$. The notation
  $e''$ stands for any edge that is not incident on
  $u$; there is one transition for every such
  $e''$. Initial and accepting states are respectively
  $q^{init}$ and $q^{acc}$.
  Let
  $L_G(u)$ be the set of words that have an accepting run in
  $M_G$ starting from
  $u$ as the initial state. 
  If $u \neq v$, $L_G(u) = L_G(v)$ implies
  $(u,v)$ is an edge and there are no other edges outgoing either from
  $u$ or
  $v$. To avoid this corner case, we restrict the vertex cover problem
  to connected graphs of 3 or more vertices. Then we have
  $M_G$ to be a minimal DFA, with no two states equivalent. Here are
  two main ideas.

  \emph{Suppressing a state.} Suppose state $u$ of $M_G$
  is suppressed (i.e. $u$ is not in a suffix-tracking
  set). In Figure~\ref{fig:complexity-mg}, we show
  the induced transitions from $q^{init}$ and a vertex $v$. However,
  some of them will be useless transitions: most importantly, the set
  of transitions $q^{init} \xra{u1, u2, \dots, u\Delta} q^{sink}$ will
  be useless bigger-suffix-transitions due to
  $q^{init} \xra{1, 2, \dots, \Delta} q^{sink}$. Similarly,
  $v \xra{e1, e2, \dots, e\Delta} q^{sink}$ will be removed. There are
  some more useless bigger-suffix-transitions, like
  $v \xra{e e''} q^{sink}$ for some $e''$ that is not incident on $v$
  and $u$. So from each $v$, at most $2 |E|$ transitions are
  added. But crucially, after removing
  useless transitions, the $\Delta$ transitions from $u$
  no longer appear. If we choose $\Delta$ large enough to compensate
  for the other transitions, we get an overall reduction in size by
  suppressing states.

  \emph{Two states connected by an edge cannot both be suppressed.}  Suppose $e = (u, v)$ is an edge. If $S$ is a set where
  $u, v \notin S$, then the transition $v \xra{e} u$ is not
  suffix-compatible: the simple word $ue$ from $q^{init}$ to $v$, when
  extended with $e$ gives the word $uee$; no suffix of $uee$ is a
  simple word from $q^{init}$ to $u$. We deduce that
  suffix-tracking sets in $M_G$ correspond to a vertex cover in $G$,
  and vice-versa.

  These two observations lead to a translation from minimum vertex
  cover to suffix-tracking sets with least number of states. Due to
  our choice of $\Delta$, DSAs with smallest (total) size are indeed
  obtained from suffix-tracking sets with the least number of states. 
Let $k = (k' + 2) \times 2 \Delta + (2 \Delta - 1)$.

\subparagraph*{Vertex cover $\le k'$ implies DSA $\le k$.}
Assume there is a vertex cover $\{v_1,
  \dots, v_p \}$ in $G$ with $p \le k'$. Let $S$ be the set of states in $M_G$ corresponding to $\{v_1,
  \dots, v_p \}$. Observe that $S \cup \{q^{init},q^{sink},q^{acc}\}$ is a suffix-tracking set; every transition is trivially suffix-compatible ($\forall q\xra{a}u, q\in S \text{ or } u\in S $). Well-formedness holds because $\forall p,q \in S, \a \in \spath{p}{q}{S}$ we have $|\a| \le 2$; this means $\forall q' \notin S, \b \in \spath{p}{q'}{S}$, we have
  $\a \not \sfx \b$ (since $|\b|=1$). Hence the derived DSA will be equivalent to $M$.

  The derived DSA has $p + 3$ states, and transitions $q \xra{1, 2, \dots,
\Delta} q^{sink}$ from each except for the $q^{sink}$ state. The
transitions on $q^{sink}$ are removable, and hence will be absent. All of this adds $(p+2) \times 2 \Delta$ to the total size (edges +
label lengths). Apart from these, there are transitions with
labels of length at most $2$, over the alphabet $V \cup E \cup \$$. From each vertex, $v$, there are $|V|$ transitions to $q^{sink}$, one transition to $q^{acc}$ and at most $2|E|$ transitions to other states or $q^{sink}$. We
can choose a large enough $\Delta$ (say $(|V| + |E|)^4$), so that the size of
these extra transitions is at most $2 \Delta - 1$. 
  Hence, total size is $\le (p + 2) \times 2 \Delta + (2 \Delta - 1)$.

By assumption, we have $p \le k'$. Therefore, the size of the DSA 
is $\le (k' + 2) \times 2 \Delta + (2 \Delta - 1)
=k$.

  \subparagraph*{DSA $\le k$ implies vertex cover $\le k'$.}

  Let $\Aa$ be a DSA with size $\le k$. It may not be derived from $M_G$. However, by Proposition~\ref{prop:dsa-derivable-from-tracking-dfa} we know $\Aa$ is derived from a DFA $M$, the tracking DFA for $\Aa$. Moreover since $M_G$ is the minimal DFA, we know that $M$ will be a \emph{refinement} of $M_G$ (see Section~\ref{sec:preliminaries} for definition).

  Let us consider a pair of states $u$ and $v$ from $M_G$, such that the vertices $u,v \in G$ have an edge between them labeled $e$. The DFA $M$ will have two sets of states $u_1,u_2,\dots,u_i$ and $v_1,v_2,\dots,v_j$ that are language-equivalent to $u$ and $v$ respectively. Its initial state must have a transition on $v$ to one of $v_1,v_2,\dots,v_j$. Without loss of generality, let it be to $v_1$. Each of $v_1,v_2,\dots,v_j$ must have a transition on $e$ to one of $u_1,u_2,\dots,u_i$ (for equivalence with $M_G$) and vice-versa. Consider the run from the initial state on $ve^{i+j+1}$. At least one of the states among $u_1,u_2,\dots,u_i, v_1,v_2,\dots,v_j$ must be visited twice; consider the first such instance. The transition on $e$ that re-visits a state cannot be suffix-compatible w.r.t a set $S$, if none of these states are in $S$. For it to be suffix-compatible, the string $ve^k.e$ (from initial state to the first repeated state) must have its longest simple-word suffix go the same state. Since $ve^k.e$ is not simple by itself, its longest suffix must consist entirely of $e$'s. But on any string of $e$'s, the initial state moves only to the sink state(s) and not to any of $u_1,u_2,\dots,u_i, v_1,v_2,\dots,v_j$. Hence any suffix-tracking set must contain at least one of these states, which maps to at least one of $u$ or $v$ in $G$. Every suffix-tracking set of $M$ therefore maps to a vertex cover $\{v_1, v_2, \dots, v_p\}$.

  Now we show that the size of this vertex cover is $\le k'$.
Each of the states picked in the suffix-tracking set will contribute to atleast $2 \Delta$ in the total size, due to the $\Delta$ transitions. We will also have these $\Delta$ transitions from the initial and accepting states. Therefore, the total size is $(p + 2) \times 2 \Delta + y$ for some $y > 0$. Hence $(p + 2) \times 2 \Delta \le k$. This implies $p \le k'$: otherwise we will have $p \ge k'+1$, and hence $(p + 2) \times 2 \Delta \ge (k' + 1 + 2) \times 2 \Delta = (k' + 2) \times 2 \Delta + 2 \Delta > k$, a contradiction.


\section{Conclusion}
\label{sec:conclusion}

We have introduced the model of deterministic suffix-reading automata,
compared its size with DFAs and DGAs, proposed a method to derive DSAs
from DFAs, and presented the complexity of minimization. 
The work on DGAs~\cite{giammarresi1999deterministic} inspired us to look for methods to derive DSAs from
DFAs, and investigate whether they lead to minimal DSAs for a
language. This led to our technique of suffix-tracking sets, which
derives DSAs from DFAs. The technique imposes some natural conditions
on subsets of states, for them to be tracking patterns at each
state.
However, surprisingly, the smallest DSA that we can derive from the
canonical DFA need not correspond to the minimal DSA of a
language. 
This leads to several questions about the DSA model, and our
derivation methodology.

When does the smallest DSA derived from the canonical DFA correspond
to a minimal DSA? 
Can we use our techniques to study minimality in terms of number of
states?  Closure properties of DSAs - do we perform the union,
intersection and complementation operations on DSAs without computing
the entire equivalent DFAs? What about practical studies of using
DSAs?  To sum up, we believe the DSA model offers advantages in the
specification of systems and in also studying regular languages
from a different angle. The results that we have presented throw
light on some of the different aspects in this model, and lead to
many questions both from theoretical and practical
perspectives.


\bibliographystyle{eptcs} \bibliography{dsa}

\end{document}